\newcommand{\ra}[1]{\renewcommand{\arraystretch}{#1}}
\renewcommand{\epsilon}{\varepsilon}
\newcommand{\ceil}[1]{\lceil #1 \rceil}
\newcommand{\OPT}{\mbox{OPT}}
\newcommand{\calS}{\ensuremath{\mathcal{S}}}
\newcommand{\calM}{\ensuremath{\mathcal{M}}}
\newcommand{\Ex}{\ensuremath{\mathbb{E}}}
\newif\ifnotes
\newcommand{\bojana}[1]{\textcolor{blue}{{\footnotesize #1}}\marginpar{\raggedright\tiny \textcolor{blue}{Bojana}}}
\definecolor{green}{rgb}{0,0.7,0}
\newcommand{\martin}[1]{\textcolor{green}{{\footnotesize #1}}\marginpar{\raggedright\tiny \textcolor{green}{Martin}}}
\newcommand{\bojana}[1]{}
\newcommand{\martin}[1]{}
\DeclareRobustCommand{\rvdots}{%
  \vbox{
    \baselineskip4\p@\lineskiplimit\z@
    \kern-\p@
    \hbox{.}\hbox{.}\hbox{.}
  }}
\title{Combinatorial Secretary Problems with Ordinal Information\footnote{This work was supported by DFG Cluster of Excellence MMCI at Saarland University.}}
\author{Martin Hoefer\thanks{Institute of Computer Science, Goethe University Frankfurt/Main, Germany. {\tt mhoefer@cs.uni-frankfurt.de}} \and 
Bojana Kodric\thanks{Max Planck Institute for Informatics and Saarland University, Saarland Informatics Campus, Germany. {\tt bojana.kodric@mpi-ing.mpg.de}}}
\date{}
\newtheorem{theorem}{Theorem} 
\newtheorem{lemma}{Lemma}
\newtheorem{definition}{Definition}
\begin{document}

\maketitle
\begin{abstract}
The secretary problem is a classic model for online decision making. Recently, combinatorial extensions such as matroid or matching secretary problems have become an important tool to study algorithmic problems in dynamic markets. Here the decision maker must know the numerical value of each arriving element, which can be a demanding informational assumption. In this paper, we initiate the study of combinatorial secretary problems with \emph{ordinal information}, in which the decision maker only needs to be aware of a preference order consistent with the values of arrived elements. The goal is to design online algorithms with small competitive ratios.

For a variety of combinatorial problems, such as bipartite matching, general packing LPs, and independent set with bounded local independence number, we design new algorithms that obtain constant competitive ratios. 
For the matroid secretary problem, we observe that many existing algorithms for special matroid structures maintain their competitive ratios even in the ordinal model. In these cases, the restriction to ordinal information does not represent any additional obstacle. Moreover, we show that ordinal variants of the submodular matroid secretary problems can be solved using algorithms for the linear versions by extending~\cite{FeldmanZ15}. In contrast, we provide a lower bound of $\Omega(\sqrt{n}/(\log n))$ for algorithms that are oblivious to the matroid structure, where $n$ is the total number of elements. This contrasts an upper bound of $O(\log n)$ in the cardinal model, and it shows that the technique of thresholding is not sufficient for good algorithms in the ordinal model.

%

\end{abstract}


\section{Introduction}\label{sec:introduction}

The secretary problem is a classic approach to model online decision making under uncertain input. The interpretation is that a firm needs to hire a secretary. There are $n$ candidates and they arrive sequentially in random order for an interview. Following an interview, the firm learns the value of the candidate, and it has to make an immediate decision about hiring him before seeing the next candidate(s). If the candidate is hired, the process is over. Otherwise, a rejected candidate cannot be hired at a later point in time. The optimal algorithm is a simple greedy rule that rejects all candidates in an initial learning phase. In the following acceptance phase, it hires the first candidate that is the best among all the ones seen so far. It manages to hire the best candidate with optimal probability $1/e$. Notably, it only needs to know if a candidate is the best seen so far, but no exact numerical values.

Since its introduction~\cite{Dynkin63}, the secretary problem has attracted a huge amount of research interest. Recently, a variety of combinatorial extensions have been studied in the computer science literature~\cite{BabaioffIKK08}, capturing a variety of fundamental online allocation problems in networks and markets, such as network design~\cite{KorulaP09}, resource allocation~\cite{KesselheimRTV13}, medium access in networks~\cite{GoebelHKSV14}, or competitive admission processes~\cite{ChenHKLM15}. Prominently, in the matroid secretary problem~\cite{BabaioffIK07}, the elements of a weighted matroid arrive in uniform random order (e.g., weighted edges of an undirected graph $G$). The goal is to select a max-weight independent set of the matroid (e.g., a max-weight forest of $G$). The popular \emph{matroid secretary conjecture} claims that for all matroids, there exists an algorithm with a constant \emph{competitive ratio}, i.e., the expected total weight of the solution computed by the algorithm is at least a constant fraction of the total weight of the optimum solution. Despite much progress on special cases, the conjecture remains open. Beyond matroids, online algorithms for a variety of combinatorial secretary problems with downward-closed structure have recently been studied (e.g., matching~\cite{KorulaP09,KesselheimRTV13}, independent set~\cite{GoebelHKSV14}, linear packing problems~\cite{KesselheimRTV14} or submodular versions~\cite{FeldmanZ15,KesselheimT17}).

The best known algorithms for matroid or matching secretary problems rely heavily on knowing the exact weight structure of elements. They either compute max-weight solutions to guide the admission process or rely on advanced bucketing techniques to group elements based on their weight. For a decision maker, in many applications it can be quite difficult to determine an exact cardinal preference for each of the incoming candidates. In contrast, in the original problem, the optimal algorithm only needs \emph{ordinal information} about the candidates. This property provides a much more robust guarantee, since the numerical values can be arbitrary, as long as they are consistent with the preference order. 

In this paper, we study algorithms for combinatorial secretary problems that rely only on ordinal information. We assume that there is an unknown value for each element, but our algorithms only have access to the \emph{total order} of the elements arrived so far, which is consistent with their values. We term this the \emph{ordinal model}; as opposed to the \emph{cardinal model}, in which the algorithm learns the exact values. We show bounds on the competitive ratio, i.e., we compare the quality of the computed solutions to the optima in terms of the exact underyling but unknown numerical values. Consequently, competitive ratios for our algorithms are robust guarantees against uncertainty in the input. Our approach follows a recent line of research by studying the potential of algorithms with ordinal information to approximate optima based on numerical values~\cite{AnshelevichS16,AnshelevichP16,AbrahamIKM07,ChakrabartyS14}.

\subsection{Our Contribution}\label{subsec:contribution}

We first point out that many algorithms proposed in the literature continue to work in the ordinal model. In particular, a wide variety of algorithms for variants of the matroid secretary problem with constant competitive ratios continue to obtain their guarantees in the ordinal model (see Table~\ref{table:known-algorithms} for an overview). This shows that many results in the literature are much stronger, since the algorithms require significantly less information. Notably, the algorithm of~\cite{BateniHZ13} extends to the ordinal model and gives a ratio of $O(\log^2 r)$ for general matroids
, where $r$ is the rank of the matroid. In contrast, the improved algorithms with ratios of $O(\log r)$ and $O(\log \log r)$~\cite{BabaioffIK07,Lachish14,FeldmanSZ15} are not applicable in the ordinal model. 

For several combinatorial secretary problems we obtain new algorithms for the ordinal model. For online bipartite matching we give an algorithm that is $2e$-competitive. We also extend this result to online packing LPs with at most $d$ non-zero entries per variable. Here we obtain an $O(d^{(B+1)/B})$-competitive algorithm, where $B$ is a tightness parameter of the constraints. Another extension is matching in general graphs, for which we give a $8.78$-competitive algorithm.


We give an $O(\alpha_1^2)$-competitive algorithm for the online weighted independent set problem in graphs, where $\alpha_1$ is the local independence number of the graph. For example, for the prominent case of unit-disk graphs, $\alpha_1=5$ and we obtain a constant-competitive algorithm. 

For matroids, we extend a result of~\cite{FeldmanZ15} to the ordinal model: The reduction from submodular to linear matroid secretary can be done with ordinal information for marginal weights of the elements. More specifically, we show that whenever there is an algorithm that solves the matroid secretary problem in the ordinal model on some matroid class and has a competitive ratio of $\alpha$, there is also an algorithm for the submodular matroid secretary problem in the ordinal model on the same matroid class with a competitive ratio of $O(\alpha^2)$. The ratio can be shown to be better if the linear algorithm satisfies some further properties.

Lastly, we consider the importance of knowing the weights, ordering, and structure of the domain. For algorithms that have complete ordinal information but cannot learn the specific matroid structure, we show a lower bound of $\Omega(\sqrt{n}/(\log n))$, even for partition matroids, where $n$ is the number of elements in the ground set. This bound contrasts the $O(\log^2 r)$-competitive algorithm and indicates that learning the matroid structure is crucial in the ordinal model. Moreover, it contrasts the cardinal model, where thresholding algorithms yield $O(\log r)$-competitive algorithms without learning the matroid structure.

For structural reasons, we present our results in a slightly different order. We first discuss the matroid results in Section~\ref{sec:matroids}. Then we proceed with matching and packing in Section~\ref{sec:matchingPacking} and independent set in Section~\ref{sec:independentSet}. All missing proofs are deferred to the Appendix.


\section{Preliminaries and Related Work}
\label{sec:preliminaries}


In the typical problem we study, there is a set $E$ of elements arriving sequentially in random order. The algorithm knows $n = |E|$ in advance. It must accept or reject an element before seeing the next element(s). There is a set $\calS \subseteq 2^E$ of \emph{feasible solutions}. $\calS$ is downward-closed, i.e., if $S \in \calS$, then $S' \in \calS$ for every $S' \subseteq S$. The goal is to accept a feasible solution that maximizes an objective function $f$. In the \emph{linear version}, each element has a \emph{value} or \emph{weight} $w_e$, and $f(S) = \sum_{e \in S} w_e$. In the \emph{submodular version}, $f$ is submodular and $f(\emptyset) = 0$. 

In the linear ordinal model, the algorithm only sees a strict total order over the elements seen so far that is consistent with their weights (ties are broken arbitrarily). For the submodular version, we interpret the value of an element as its marginal contribution to a set of elements. In this case, our algorithm has access to an \emph{ordinal oracle} $\mathcal{O}(S)$. For every subset $S$ of arrived elements, $\mathcal{O}(S)$ returns a total order of arrived elements consistent with their marginal values $f(e\vert S) = f(S\cup \{e\}) - f(S)$. 

Given this information, we strive to design algorithms that will have a small competitive ratio $f(S^*)/\Ex[f(S_{alg})]$. Here $S^*$ is an optimal feasible solution and $S_{alg}$ the solution returned by the algorithm. Note that $S_{alg}$ is a random variable due to random-order arrival and possible internal randomization of the algorithm.

In the matroid secretary problem, the pair $\calM = (E,\calS)$ is a matroid. We summarize in Table~\ref{table:known-algorithms} some of the existing results for classes of the (linear) problem that transfer to the ordinal model. The algorithms for all restricted matroid classes other than the graphic matroid assume a-priori complete knowledge of the matroid -- only weights are revealed online. The algorithms do not use cardinal information, their decisions are based only on ordinal information. As such, they translate directly to the ordinal model. Notably, the algorithm from~\cite{BateniHZ13} solves even the general submodular matroid secretary problem in the ordinal model. 

%

\begin{table}[t]
\centering
\ra{1.5}
\begin{tabular}{@{}l||ccccccc@{}} \hline
Matroid & general & k-uniform & graphic & cographic & transversal & laminar & regular \\ \hline
Ratio & $O(\log^2 r)$ & $1+O(\sqrt{1/k})$, $e$ & $2e$ & $3e$ & $16$ & $3\sqrt{3}e$ & $9e$ \\ \hline
Reference & \cite{BateniHZ13} & \cite{Dynkin63,Kleinberg05,BabaioffIKK07} & \cite{KorulaP09} & \cite{Soto13} & \cite{DimitrovP12} & \cite{JailletSZ13} & \cite{DinitzK14} \\ \hline
\end{tabular}

\caption{Existing algorithms for matroid secretary problems that provide the same guarantee in the ordinal model.}
\label{table:known-algorithms}
\end{table}




\subsection{Related Work}\label{subsec:related}
Our work is partly inspired by~\cite{AnshelevichS16,AnshelevichS16WINE}, who study ordinal approximation algorithms for classical optimization problems. They design constant-factor approximation algorithms for matching and clustering problems with ordinal information and extend the results to truthful mechanisms. Our approach here differs due to online arrival. Anshelevich et al.~\cite{AnshelevichP16} examine the quality of randomized social choice mechanisms when agents have metric preferences but only ordinal information is available to the mechanism. Previously,~\cite{AbrahamIKM07,ChakrabartyS14} studied ordinal measures of efficiency in matchings, for instance the average rank of an agent's partner.

The literature on the secretary problem is too broad to survey here. We only discuss directly related work on online algorithms for combinatorial variants. Cardinal versions of these problems have many important applications in ad-auctions and item allocation in online markets~\cite{HajiaghayiKP04}. For multiple-choice secretary, where we can select any $k$ candidates, there are algorithms with ratios that are constant and asymptotically decreasing in $k$~\cite{Kleinberg05,BabaioffIKK07}. More generally, the matroid secretary problem has attracted a large amount of research interest~\cite{BabaioffIK07,ChakrabortyL12,Lachish14,FeldmanSZ15}, and the best-known algorithm in the cardinal model has ratio $O(\log \log r)$. For results on specific matroid classes, see the overview in Table~\ref{table:known-algorithms}. Extensions to the submodular version are treated in~\cite{BateniHZ13,FeldmanZ15}.

Another prominent domain is online bipartite matching, in which one side of the graph is known in advance and the other arrives online in random order, each vertex revealing all incident weighted edges when it arrives~\cite{KorulaP09}. In this case, there is an optimal algorithm with ratio $e$~\cite{KesselheimRTV13}. Moreover, our paper is related to G\"obel et al.~\cite{GoebelHKSV14} who study secretary versions of maximum independent set in graphs with bounded inductive independence number $\rho$. They derive an $O(\rho^2)$-competitive algorithm for unweighted and an $O(\rho^2 \log n)$-competitive algorithm for weighted independent set.

In addition, algorithms have been proposed for further variants of the secretary problem, e.g., the temp secretary problem (candidates hired for a fixed duration)~\cite{FiatGKN15}, parallel secretary (candidates interviewed in parallel)~\cite{FeldmanT12}, or local secretary (several firms and limited feedback)~\cite{ChenHKLM15}. For these variants, some existing algorithms (e.g., for the temp secretary problem in~\cite{FiatGKN15}) directly extend to the ordinal model. In general, however, the restriction to ordinal information poses an interesting challenge for future work in these domains.

\section{Matroids}\label{sec:matroids}

\subsection{Submodular Matroids}\label{subsec:submodular}

\begin{algorithm}[b]
\caption{Greedy~\cite{FeldmanZ15}}\label{greedy}
\SetKwInOut{Input}{Input}\SetKwInOut{Output}{Output}
\Input{ground set $E$}
\Output{independent set $M$}
\BlankLine
Let $M\leftarrow \emptyset$ and $E'\leftarrow E$.\\
\While{$E'\neq \emptyset$}{Let $u \leftarrow \max_{u'} f(u'\vert M)$ and $E'\leftarrow E'\setminus\{u\}$\;
	 \textbf{if} ($M\cup\{u\}$ independent in $\mathcal{M}$) $\wedge$ ($f(u\vert M)\ge 0$) \textbf{then} add $u$ to $M$\;
}
\end{algorithm}

\begin{algorithm}[t]
\caption{Online(p) algorithm~\cite{FeldmanZ15}}\label{alg:submodreduction}
\SetKwInOut{Input}{Input}\SetKwInOut{Output}{Output}
\Input{$n=\lvert E\rvert$, size of the ground set}
\Output{independent set $Q\cap N$}
Choose $X$ from the binomial distribution $B(n, 1/2)$.\\
Reject the first $X$ elements of the input. Let $L$ be the set of these elements.\\

Let $M$ be the output of Greedy on the set $L$.\\
Let $N\leftarrow \emptyset$.\\
\For{each element $u\in E\setminus L$}{Let $w(u)\leftarrow 0$.\\
\If{$u$ accepted by Greedy applied to $M\cup\{u\}$}{
With probability $p$ do the following:\\
Add $u$ to $N$.\\ 
Let $M_u\subseteq M$ be the solution of Greedy immediately before it adds $u$ to it.\\ 
$w(u)\leftarrow f(u\vert M_u)$.}

Pass $u$ to Linear with weight $w(u)$.}
\Return $Q\cap N$, where $Q$ is the output of Linear.
\end{algorithm}

We start our analysis by showing that -- in addition to algorithms for special cases mentioned above -- a powerful technique for submodular matroid secretary problems~\cite{FeldmanZ15} can be adjusted to work even in the ordinal model. More formally, in this section we show that there is a reduction from submodular matroid secretary problems with ordinal information (SMSPO) to linear matroid secretary problems with ordinal information (MSPO). The reduction uses Greedy (Algorithm~\ref{greedy}) as a subroutine and interprets the marginal value when added to the greedy solution as the value of an element. These values are then forwarded to whichever algorithm (termed Linear) that solves the linear version of the problem. In the ordinal model, we are unable to see the exact marginal values. Nevertheless, we manage to construct a suitable ordering for the forwarded elements. Consequently, we can apply algorithm Linear as a subroutine to obtain a good solution for the ordinal submodular problem.

Let $\calM = (E,\calS)$ be the matroid, $f$ the submodular function, and $E$ the ground set of elements. The marginal contribution of element $u$ to set $M$ is denoted by $f(u \vert M) = f(M \cup \{u\})-f(M)$. Since $f$ can be non-monotone, Greedy in the cardinal model also checks if the marginal value of the currently best element is positive. While we cannot explicitly make this check in the ordinal model, note that $f(u\vert M)\ge 0 \iff f(M\cup \{u\})\ge f(M) = f(M \cup \{u'\})$ for every $u' \in M$. Since the ordinal oracle includes the elements of $M$ in the ordering of marginal values, there is a way to check positivity even in the ordinal model. Therefore, our results also apply to non-monotone functions $f$.

A potential problem with Algorithm~\ref{alg:submodreduction} is that we must compare marginal contributions of different elements w.r.t.\ different sets. We can resolve this issue by following the steps of the Greedy subroutine that tries to add new elements to the greedy solution computed on the sample. We use this information to construct a correct ordering over the marginal contributions of elements that we forward to Linear.

\begin{lemma}\label{lem:structureSubmodular}
Let us denote by $s_u$ the step of Greedy in which the element $u$ is accepted when applied to $M+u$. Then $s_{u_1}<s_{u_2}$ implies $f(u_1\vert M_{u_1}) \ge f(u_2\vert M_{u_2})$.
\end{lemma}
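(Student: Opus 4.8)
The plan is to relate the two hypothetical executions of Greedy named in the statement — one on $M\cup\{u_1\}$, one on $M\cup\{u_2\}$ — to the single execution of Greedy on the sample $L$ that actually produced $M$. First I would fix notation: let $m_1,\dots,m_\ell$ (with $\ell=|M|$) be the elements of $M$ indexed by the order in which Greedy accepted them while running on $L$, and set $M_j=\{m_1,\dots,m_j\}$, so $M_0=\emptyset$ and $M_\ell=M$. The point of looking at that run is that it never \emph{accepts} an element outside $M$, so at the moment it accepts $m_{j+1}$ the current partial solution is exactly $M_j$; hence $f(m_{j+1}\vert M_j)\ge 0$, and, since $m_{j+2},\dots,m_\ell$ are all still available at that moment, $f(m_{j+1}\vert M_j)\ge f(m_l\vert M_j)$ for every $l>j+1$. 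Call these two facts $(\star)$.

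Next I would establish the structural claim that drives the proof: \emph{for any online element $u$ accepted by Greedy run on $M\cup\{u\}$, that run accepts $m_1,m_2,\dots$ in exactly this order until, at step $s_u$, it inserts $u$; consequently $M_u=M_{s_u-1}$ is a prefix of $m_1,\dots,m_\ell$ and $|M_u|=s_u-1$.} This is a short induction on the step count. Suppose after $j$ steps ($j<s_u$) the accepted set is $M_j$ and $u$ has not yet been picked; then the remaining elements are $\{m_{j+1},\dots,m_\ell\}\cup\{u\}$, and by $(\star)$ the maximum-marginal choice among the $m$'s is $m_{j+1}$, so Greedy picks $m_{j+1}$ or $u$. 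It cannot pick $u$ at a step before $s_u$ (once picked, $u$ is never reconsidered, so that would contradict the definition of $s_u$), hence it picks $m_{j+1}$, which it accepts because $M_{j+1}\subseteq M$ is independent and $f(m_{j+1}\vert M_j)\ge 0$ by $(\star)$. At step $s_u$ the accepted set is $M_{s_u-1}$, $u$ is still available, and it is picked there, which gives $M_u=M_{s_u-1}$.

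With this in hand the lemma follows by a three-step chain. Assume $s_{u_1}<s_{u_2}$ and put $k_i=s_{u_i}-1=|M_{u_i}|$, so $k_1<k_2\le\ell$ and the element $m_{k_1+1}$ exists and lies in $M_{u_2}\setminus M_{u_1}$. In the run on $M\cup\{u_1\}$, at step $k_1+1=s_{u_1}$ Greedy picks $u_1$ while $m_{k_1+1}$ is still available, so $f(u_1\vert M_{u_1})\ge f(m_{k_1+1}\vert M_{u_1})$. In the run on $M\cup\{u_2\}$, at step $k_1+1<s_{u_2}$ Greedy picks $m_{k_1+1}$ (by the structural claim, since it cannot pick $u_2$ yet) while $u_2$ is still available, so $f(m_{k_1+1}\vert M_{u_1})\ge f(u_2\vert M_{u_1})$. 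Finally, submodularity together with $M_{u_1}\subseteq M_{u_2}$ gives $f(u_2\vert M_{u_1})\ge f(u_2\vert M_{u_2})$. Concatenating the three inequalities yields $f(u_1\vert M_{u_1})\ge f(u_2\vert M_{u_2})$.

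I expect the main obstacle to be exactly this last step of routing the comparison: $f(u_1\vert M_{u_1})$ and $f(u_2\vert M_{u_2})$ come from two \emph{different} executions with different ground sets, so the familiar fact that greedy marginals are non-increasing \emph{within a single run} does not apply directly; the fix is to use the common ``pivot'' $m_{k_1+1}\in M_{u_2}\setminus M_{u_1}$ and pay one application of submodularity. The remaining technicalities are minor: ties in the $\max$ of Greedy only need a fixed consistent rule (a global priority order on $E$), and all inequalities used are weak, so everything goes through; the degenerate cases $k_1=0$ (so $M_{u_1}=\emptyset$) and $k_2=\ell$ (so $u_2$ is appended after all of $M$) require no change.
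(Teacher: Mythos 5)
Your proof is correct and follows essentially the same route as the paper: the same three-inequality chain $f(u_1\mid M_{u_1})\ge f(m\mid M_{u_1})\ge f(u_2\mid M_{u_1})\ge f(u_2\mid M_{u_2})$ through a pivot element of $M$, followed by one application of submodularity. The only difference is that you explicitly prove the structural claim (that the run on $M\cup\{u\}$ rebuilds a prefix of $M$ in the original acceptance order, so $M_{u}=M_{s_u-1}$ and the pivot $m_{k_1+1}$ is well-defined), which the paper leaves implicit when it introduces ``the element of $M$ that would be taken in step $s_{u_1}$''; this extra care is a genuine improvement, as is avoiding the paper's typo $s_1>s_2$.
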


\begin{proof}
First, note that $M_{u_1}\subset M_{u_2}$ when $s_1>s_2$. We denote by $m_{u_1}$ the element of $M$ that would be taken in step $s_{u_1}$ if $u_1$ would not be available. Then we know that $f(u_1\vert M_{u_1})\ge f(m_{u_1}\vert M_{u_1})$. Furthermore, since $s_1<s_2$, $f(m_{u_1}\vert M_{u_1})\ge f(u_2\vert M_{u_1})$. Lastly, by using submodularity, we know that $f(u_2\vert M_{u_1}) \ge f(u_2\vert M_{u_2})$.
\end{proof}

When $s_{u_1}=s_{u_2}$, then $M_{u_1}=M_{u_2}$ so the oracle provides the order of marginal values. Otherwise, the lemma yields the ordinal information. Thus, we can construct an ordering for the elements that are forwarded to Linear that is consistent with their marginal values in the cardinal model. Hence, the reduction can be applied in the ordinal model, and all results from~\cite{FeldmanZ15} continue to hold. We mention only the main theorem. It implies constant ratios for all problems in Table~\ref{table:known-algorithms} in the submodular version. 

\begin{theorem}
Given an arbitrary algorithm Linear for MSPO that is $\alpha$-competitive on a matroid class, there is an algorithm for SMSPO with competitive ratio is at most $24\alpha(3\alpha+1)=O(\alpha^2)$ on the same matroid class. For SMSPO with monotone $f$, it can be improved to $8\alpha(\alpha+1)$.
\label{thm:submodular}
\end{theorem}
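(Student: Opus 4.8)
The plan is to reproduce the analysis of~\cite{FeldmanZ15} essentially verbatim, after establishing that the only place where cardinal information is used in Algorithm~\ref{alg:submodreduction} can be replaced by the ordering guaranteed by Lemma~\ref{lem:structureSubmodular}. First I would note that the algorithm \texttt{Online(p)} feeds to \texttt{Linear} a weighted instance on the ground set $E\setminus L$, where the weight of an accepted element $u$ is its marginal contribution $f(u\vert M_u)$ to the partial greedy solution $M_u\subseteq M$ computed on the sample $L$. The correctness of the reduction depends only on the \emph{relative order} of these weights being passed correctly to \texttt{Linear}: the absolute magnitudes never enter the combinatorial argument, only the fact that \texttt{Linear} is $\alpha$-competitive with respect to whatever weights it sees, and that the total weight $\sum_{u\in N} w(u)$ relates to $f$ of greedy-type solutions. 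Since Lemma~\ref{lem:structureSubmodular} (together with the oracle call $\mathcal{O}(M_{u})$ when two elements are accepted in the same greedy step, in which case $M_{u_1}=M_{u_2}$) yields a total order on the forwarded elements consistent with their true marginal weights, the ordinal algorithm can simulate the cardinal one, and \texttt{Linear} run in the ordinal model delivers the same guarantee it would in the cardinal model.

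Given that, the proof is just the citation-backed chain: by~\cite{FeldmanZ15} the reduction loses a factor $O(\alpha)$ from the competitiveness of \texttt{Linear} and another $O(\alpha)$ (morally a constant times $\alpha$) from the gap between the greedy solution on the random sample and the submodular optimum, giving the stated bound $24\alpha(3\alpha+1)$ in general and $8\alpha(\alpha+1)$ for monotone $f$. I would state that the three ingredients are: (i) the binomial/sampling step ensures each element is in $L$ independently with probability $1/2$, so the greedy solution $M$ on $L$ has expected $f$-value within a constant factor of the offline greedy optimum, which in turn is within a factor $2$ (resp.\ a constant) of $f(S^*)$ by standard submodular greedy bounds; (ii) the random subsampling with probability $p$ and the definition of $w(u)$ make $\sum_{u} w(u)$ a lower bound (in expectation, up to the $p$ factor) on the incremental value of running greedy on all of $E$; (iii) \texttt{Linear}'s $\alpha$-competitiveness on the weighted instance, combined with the fact that the returned set $Q\cap N$ is feasible and its $f$-value dominates $\sum_{u\in Q\cap N} w(u)$ by submodularity of the greedy increments. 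Choosing $p$ optimally (as in~\cite{FeldmanZ15}) balances these losses and yields the constants.

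The only genuinely new content relative to~\cite{FeldmanZ15} is the observation in the preceding paragraphs, so the main obstacle — already dispatched by Lemma~\ref{lem:structureSubmodular} and the positivity-check remark — is making sure that \emph{every} use of numerical weights in the original proof is confined to (a) comparisons among forwarded elements, (b) the positivity test $f(u\vert M)\ge 0$ inside \texttt{Greedy}, and (c) internal bookkeeping of \texttt{Linear}, all of which survive the passage to ordinal information. I would therefore structure the proof as: restate that the ordinal simulation of \texttt{Online(p)} is well-defined (by Lemma~\ref{lem:structureSubmodular} and the ordinal positivity check), observe that it produces the identical distribution over outputs as the cardinal reduction of~\cite{FeldmanZ15} when \texttt{Linear} is replaced by its ordinal counterpart of the same competitive ratio $\alpha$, and then invoke the quantitative analysis of~\cite{FeldmanZ15} unchanged to obtain $24\alpha(3\alpha+1)$, with the monotone refinement $8\alpha(\alpha+1)$. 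I expect no new calculations are required; the risk is purely in verifying that no step of the source analysis secretly exploits cardinal values beyond the enumerated places, which I would assert after inspecting their argument rather than re-deriving it here.
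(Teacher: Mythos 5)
Your proposal is correct and takes essentially the same route as the paper: establish via Lemma~\ref{lem:structureSubmodular} (plus the oracle call for ties within the same greedy step and the ordinal positivity check) that a total order consistent with the forwarded marginal weights can be constructed, note that the reduction of~\cite{FeldmanZ15} uses cardinal information only in ways that reduce to such comparisons, and then invoke the quantitative analysis of~\cite{FeldmanZ15} unchanged to obtain the stated constants. The paper gives no further proof of the theorem beyond exactly this observation.
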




\subsection{A Lower Bound}
\label{subsec:lower-bound}

Another powerful technique in the cardinal model is thresholding, where we first sample a constant fraction of the elements to learn their weights. Based on the largest weight observed, we pick a threshold and accept subsequent elements greedily if they exceed the threshold. This approach generalizes the classic algorithm~\cite{Dynkin63} and provides logarithmic ratios for many combinatorial domains~\cite{BabaioffIK07,KorulaP09,GoebelHKSV14,ChenHKLM15}. Intuitively, these algorithms \emph{learn the weights but not the structure}.

We show that this technique does not easily generalize to the ordinal model. The algorithms with small ratios in the ordinal model rely heavily on the matroid structure. Indeed, in the ordinal model we show a polynomial lower bound for algorithms in the matroid secretary problem that learn the ordering but not the structure. Formally, we slightly simplify the setting as follows. The algorithm receives the global ordering of all elements in advance. It determines (possibly at random) a threshold position in the ordering. Then elements arrive and are accepted greedily if ranked above the threshold. Note that the algorithm does not use sampling, since in this case the only meaningful purpose of sampling is learning the structure. We call this a \emph{structure-oblivious} algorithm.

\begin{theorem}\label{thm:lower}
Every structure-oblivious randomized algorithm has a competitive ratio of at least $\Omega(\sqrt{n}/(\log n))$.
\label{thm:random-lowerbound}
\end{theorem}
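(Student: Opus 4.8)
The plan is to construct a family of partition matroids on a common ground set $E$ of $n$ elements, all sharing the same global ordering of weights, such that no single threshold position can do well against all of them simultaneously. The key tension a structure-oblivious algorithm faces is this: it sees only the ranks of elements, picks a threshold position $t$ (possibly randomized), and then greedily accepts every arriving element ranked above $t$ that is still feasible. If $t$ is set very high (accept only the top few elements), the algorithm does well on instances where the optimum consists of just a handful of huge-weight elements, but terribly on instances where the optimum is a large set of many moderate-weight elements. If $t$ is set low, the reverse happens — and worse, greedily accepting many elements can ``use up'' the capacity of the matroid's parts on low-value elements, blocking later high-value ones.

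Concretely, I would take $E$ partitioned into blocks by weight: a block $B_k$ of size roughly $\sqrt{n}$ consisting of the elements ranked in positions corresponding to weight-level $2^k$, for $k = 1, \dots, \sqrt{n}$ (so $n \approx \sqrt n \cdot \sqrt n$). For each level $j$ I define a partition matroid $\calM_j$: group elements so that the optimum independent set is forced to consist of exactly the elements in block $B_j$ (a large set of medium-weight elements) versus a competing instance where a single top element is the whole optimum. The geometric spacing of weights across $\sqrt n$ levels is what produces the polynomial gap: a threshold that admits level $j$ and everything above it collects total weight dominated by the top admitted level, but if the true optimum lives at a lower level $j'$, the parts of $\calM_{j'}$ get saturated by the higher-ranked (but in $\calM_{j'}$-terms equally-or-more valuable-looking) elements the algorithm greedily grabs, so it captures only an $O(1)$ number of elements from the wrong level and loses a $\sqrt n$ factor. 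A Yao's-principle / averaging argument over the $\sqrt n$ candidate instances then shows every distribution over thresholds is bad on a constant fraction of them; the $\log n$ loss in the bound comes from the fact that a randomized threshold can spread its mass over $O(\log n)$ ``scales'' and still be within a log factor on each, so the clean lower bound is $\Omega(\sqrt n / \log n)$ rather than $\Omega(\sqrt n)$.

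The steps, in order: (1) fix the ground set, the weight levels with geometric spacing, and the block structure; (2) for each level $j$, define the partition matroid $\calM_j$ so that $\OPT(\calM_j)$ is a prescribed large set at level $j$ while all instances share the same revealed ordering, and compute $\OPT(\calM_j)$; (3) fix an arbitrary deterministic threshold $t$ and show that the greedy-above-$t$ solution on $\calM_j$ has value at most an $O(\log n / \sqrt n)$ fraction of $\OPT(\calM_j)$ for a $1 - o(1)$ fraction of the levels $j$ — the crux being the ``saturation'' claim that greedily accepting higher-ranked elements exhausts the relevant parts before the truly optimal block arrives, using the random arrival order only mildly (in fact the bound should hold for the worst case over arrival orders, making it stronger); (4) apply Yao's principle: put the uniform distribution on the $\sqrt n$ instances, conclude that any randomized structure-oblivious algorithm has expected competitive ratio $\Omega(\sqrt n / \log n)$ on this distribution, hence in the worst case.

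The main obstacle will be step (3): carefully designing the partition matroids so that (a) all instances are genuinely indistinguishable to a structure-oblivious algorithm — they must present \emph{exactly} the same sequence of ranks — and (b) the greedy-above-threshold behavior provably wastes capacity on the wrong instances. Getting the matroid parts to interlock so that high-rank elements block the optimal block, while keeping $\OPT$ as large as claimed, is the delicate combinatorial core; the weight-spacing and the Yao averaging are comparatively routine once the construction is in place.
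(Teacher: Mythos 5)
Your high-level plan shares the skeleton of the paper's argument — a family of partition-matroid instances that look identical to a structure-oblivious algorithm, Yao's principle over a distribution of them, and the $\sqrt{n}$-versus-$\log n$ tradeoff — but the specific construction you sketch has two genuine problems that would each kill the proof, and a third claim that is simply false.

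First, the geometric weight levels $2^k$ for $k=1,\dots,\sqrt{n}$ cannot work. With $\sqrt{n}$ geometric levels the dynamic range is $2^{\sqrt{n}}$, so any single element from the top level has weight dwarfing the entire bottom block ($2^{\sqrt{n}} \gg \sqrt{n}\cdot 2$). In a matroid every singleton is independent, so the optimum can never be forced to be a low-level block of $\sqrt{n}$ moderate elements: OPT always picks up the astronomically heavy top element, and the competitive ratio of your threshold algorithm is actually good, not bad. Shrinking to $O(\log n)$ levels fixes the dynamic range but then Yao's averaging over only $O(\log n)$ instances cannot produce a $\sqrt{n}$ gap. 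The paper sidesteps this entirely by first proving (Lemma~\ref{lem:01}) that worst cases occur with weights in $\{0,1\}$, and then building all instances from 0/1 weights on a path; what varies across instances is \emph{where in the global ordering} the valuable $1$s sit, not the numerical magnitudes.

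Second, your ``saturation'' mechanism is described backwards. You say the parts of $\calM_{j'}$ get saturated by \emph{higher-ranked} (heavier) elements the algorithm greedily grabs. But grabbing a heavier element from a part is a gain, not a loss — the algorithm's solution is then at least as valuable as the OPT element it displaced. The crowding that actually causes loss has to come from \emph{worthless} elements sitting in the same part as a valuable one and arriving first. This is exactly what the paper does: each of the last $k$ path segments contains one valuable $1$-edge and many $0$-edges; if the threshold is set low enough that some $0$-edges are admitted, then by random arrival the algorithm often commits a segment's slot to a $0$-edge before the valuable $1$-edge shows up. The expected number of valuable edges captured is then governed by a harmonic sum $\sum_\ell k/\ell$, which is where the $\log n$ in the final bound comes from — not from a randomized threshold ``spreading over $O(\log n)$ scales.''

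Third, and related, your remark that the bound ``should hold for the worst case over arrival orders, making it stronger'' is wrong for the correct mechanism: the crowding argument is intrinsically a random-arrival phenomenon. For a fixed favorable arrival order (valuable edges arriving first in their segments) the threshold algorithm collects the whole optimum. The lower bound has to be an expectation over the uniformly random arrival, and the paper's calculation is exactly that expectation. You also correctly flag that the combinatorial construction is the ``delicate core'' you have not carried out — and indeed that is where the proposal is missing the essential idea, namely varying the boundary between the ``always-$1$'' segment and the $k$ single-valuable segments across instances while keeping the revealed rank sequence identical.
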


\begin{proof}
In the proof, we restrict our attention to instances with weights in $\{0,1\}$ (for a formal justification, see Lemma~\ref{lem:01} in the Appendix). We give a distribution of such instances on which every deterministic algorithm has a competitive ratio of $\Omega(\sqrt{n}/(\log n))$. Using Yao's principle, this shows the claimed result for randomized algorithms.

All instances in the distribution are based on a graphic matroid (in fact, a partition matroid) of the following form. There is a simple path of $1+k$ segments. The edges in each segment have weight of $0$ or $1$. We call the edges with value $1$ in the last $k$ segments the ``valuable edges''. The total number of edges is the same in each instance and equals $n+1$. All edges in the first segment have value $1$ and there is exactly one edge of value $1$ in all other segments (that being the aforementioned valuable edges). In the first instance there are in total $k+1$ edges of value $1$ (meaning that there is only one edge in the first segment). In each of the following instances this number is increased by $k$ (in the $i$-th instance there are $(i-1)\cdot k+1$ edges in the first segment) such that the last instance has only edges with value $1$ (there are $n-k+1$ edges in the first segment). The zero edges are always equally distributed on the last $k$ path segments. The valuable edges are lower in the ordering than any non-valuable edge with value $1$ (see Figure~\ref{fig:instance-values}). Each of the instances appears with equal probability of $\frac{k}{n}$ (see Figure~\ref{fig:threshold-alg-instance} for one example instance). 
\begin{figure}[t]
{\centering
\begin{minipage}{.48\textwidth}
\hspace*{-0.5em}{
\begin{tikzpicture}[scale=0.66, every node/.style={transform shape}]
\draw (0,0) -- (10,0);
\draw (-0.1,0) -- (0,0);
\draw[line width=1mm] (0,0) -- (0.5,0);
\draw (-0.1,-0.2) -- (-0.1,0.2);
\draw (10,-0.2) -- (10, 0.2) node[xshift=0cm, yshift=0.27cm]{$n$};
\draw[thick] (9.5,-0.1) -- (9.5, 0.1) node[xshift=-.1cm, yshift=0.4cm]{$n\hspace{-1mm}-\hspace{-1mm}k$};
\draw[thick] (9,-0.1) -- (9, 0.1);
\draw[thick] (8.5,-0.1) -- (8.5, 0.1);
\draw[thick] (8,-0.1) -- (8, 0.1);
\draw[thick] (7.5,-0.1) -- (7.5, 0.1);
\draw[thick] (7,-0.1) -- (7, 0.1);
\draw[thick] (6.5,-0.1) -- (6.5, 0.1);
\draw[thick] (6,-0.1) -- (6, 0.1);
\draw[thick] (5.5,-0.1) -- (5.5, 0.1);
\draw[thick] (5,-0.1) -- (5, 0.1);
\draw[thick] (4.5,-0.1) -- (4.5, 0.1);
\draw[thick] (4,-0.1) -- (4, 0.1);
\draw[thick] (3.5,-0.1) -- (3.5, 0.1);
\draw[thick] (3,-0.1) -- (3, 0.1);
\draw[thick] (2.5,-0.1) -- (2.5, 0.1);
\draw[thick] (2,-0.1) -- (2, 0.1);
\draw[thick] (1.5,-0.1) -- (1.5, 0.1) node[midway, yshift=0.4cm]{$3k$};
\draw[thick] (1,-0.1) -- (1, 0.1) node[midway, yshift=0.4cm]{$2k$};
\draw[thick] (0.5,-0.1) -- (0.5, 0.1) node[midway, yshift=0.4cm]{$k$};
\node [] (13) [xshift=2.5cm, yshift=0.4cm] {\dots};
\draw (0,-0.1) -- (0,0.1) node[midway, yshift=0.4cm]{$1$};
\draw[decorate, decoration={brace, amplitude=5pt, mirror}] (0.6,-0.2) -- (9.9,-0.2) node[midway,yshift=-0.6cm]{$0$};
\draw[decorate, decoration={brace, amplitude=2pt, mirror}] (-0.1,-0.3) -- (0.5,-0.3) node[midway,yshift=-0.5cm]{$1$};

\draw (-0.1,-1.5) -- (10,-1.5);
\draw[line width=1mm] (0.5, -1.5) -- (1, -1.5);
\draw (-0.1,-1.7) -- (-0.1,-1.3);
\draw[thick] (0,-1.6) -- (0,-1.4);
\draw (10,-1.7) -- (10,-1.3);
\draw[thick] (9.5,-1.6) -- (9.5, -1.4);
\draw[thick] (9,-1.6) -- (9, -1.4);
\draw[thick] (8.5,-1.6) -- (8.5, -1.4);
\draw[thick] (8,-1.6) -- (8, -1.4);
\draw[thick] (7.5,-1.6) -- (7.5, -1.4);
\draw[thick] (7,-1.6) -- (7, -1.4);
\draw[thick] (6.5,-1.6) -- (6.5, -1.4);
\draw[thick] (6,-1.6) -- (6, -1.4);
\draw[thick] (5.5,-1.6) -- (5.5, -1.4);
\draw[thick] (5,-1.6) -- (5, -1.4);
\draw[thick] (4.5,-1.6) -- (4.5, -1.4);
\draw[thick] (4,-1.6) -- (4, -1.4);
\draw[thick] (3.5,-1.6) -- (3.5, -1.4);
\draw[thick] (3,-1.6) -- (3, -1.4);
\draw[thick] (2.5,-1.6) -- (2.5, -1.4);
\draw[thick] (2,-1.6) -- (2, -1.4);
\draw[thick] (1.5,-1.6) -- (1.5, -1.4);
\draw[thick] (1,-1.6) -- (1, -1.4);
\draw[thick] (0.5,-1.6) -- (0.5, -1.4);
\draw[decorate, decoration={brace, amplitude=5pt, mirror}] (1.1,-1.7) -- (9.9,-1.7) node[midway,yshift=-0.8cm]{$0$};
\draw[decorate, decoration={brace, amplitude=2pt, mirror}] (-0.1,-1.8) -- (1,-1.8) node[midway,yshift=-0.65cm]{$1$};

\node [] (1) [yshift=-2.5cm, xshift=5cm] {\rvdots};

\draw (-0.1,-3.2) -- (10,-3.2);
\draw[line width=1mm] (9.5,-3.2) -- (10,-3.2);
\draw[thick] (0,-3.3) -- (0,-3.1);
\draw (10,-3.3) -- (10,-3.1);
\draw (-0.1,-3.4) -- (-0.1,-3);
\draw (10,-3.4) -- (10, -3);
\draw[thick] (9.5,-3.3) -- (9.5, -3.1);
\draw[thick] (9,-3.3) -- (9, -3.1);
\draw[thick] (8.5,-3.3) -- (8.5, -3.1);
\draw[thick] (8,-3.3) -- (8, -3.1);
\draw[thick] (7.5,-3.3) -- (7.5, -3.1);
\draw[thick] (7,-3.3) -- (7, -3.1);
\draw[thick] (6.5,-3.3) -- (6.5, -3.1);
\draw[thick] (6,-3.3) -- (6, -3.1);
\draw[thick] (5.5,-3.3) -- (5.5, -3.1);
\draw[thick] (5,-3.3) -- (5, -3.1);
\draw[thick] (4.5,-3.3) -- (4.5, -3.1);
\draw[thick] (4,-3.3) -- (4, -3.1);
\draw[thick] (3.5,-3.3) -- (3.5, -3.1);
\draw[thick] (3,-3.3) -- (3, -3.1);
\draw[thick] (2.5,-3.3) -- (2.5, -3.1);
\draw[thick] (2,-3.3) -- (2, -3.1);
\draw[thick] (1.5,-3.3) -- (1.5, -3.1);
\draw[thick] (1,-3.3) -- (1, -3.1);
\draw[thick] (0.5,-3.3) -- (0.5, -3.1);
\draw[decorate, decoration={brace, amplitude=5pt, mirror}] (-0.1,-3.5) -- (10,-3.5) node[midway,yshift=-0.65cm]{$1$};
\end{tikzpicture}
\caption{Values for the family of instances described in the proof of Theorem~11, where the position of the ``valuable'' ones is denoted by the thick segment. \label{fig:instance-values}}}
\end{minipage}
\hfill
\begin{minipage}{.49\textwidth}
\vspace{-8.81mm}
\adjustbox{trim={0} {.18\height} {0} {0\height}, clip}{
\begin{tikzpicture}[auto, node distance = 2cm, main node/.style={circle, inner sep=2.5pt, fill, font=\sffamily\Large\bfseries}, scale=0.61, every node/.style={transform shape}, baseline={(10,-2.7)}]
\node[main node] (1) {};
\node[main node] (2) [right of=1] {};
\node[main node] (3) [right of=2] {};
\node[main node] (4) [right of=3] {};
\node[main node] (5) [xshift=9cm] {};
\node[main node] (6) [right of=5] {};
\node [] (7) [yshift=-1.5cm, xshift=1cm] {\rvdots};
\node [] (8) [yshift=1.5cm, xshift=3cm] {\rvdots};
\node [] (9) [yshift=1.5cm, xshift=5cm] {\rvdots};
\node [] (10) [yshift=1.5cm, xshift=10cm] {\rvdots};
\node [] (11) [xshift=6.7cm] {};
\node [] (12) [xshift=8.3cm] {};
\node [] (13) [xshift=7.5cm] {\dots};

\path[]
  (1) edge node[below, yshift=0.1cm] {\scriptsize 1} (2)
      edge [bend right = 30] node[below, yshift=0.1cm] {\scriptsize 1} (2)
      edge [bend right = 70] node[below] {\scriptsize 1} (2)
      edge [out=270, in=270, looseness=3] node[below] {\scriptsize 1} (2)
  (2) edge node[below, yshift=0.02cm] {\scriptsize 1} (3)
    edge [bend left = 30] node[above] {\scriptsize 0} (3)
    edge [bend left = 70] node[above] {\scriptsize 0} (3)
      edge [out=-270, in=-270, looseness=3] node[above] {\scriptsize 0} (3)
  (3) edge node[below, yshift=0.1cm] {\scriptsize 1} (4)
    edge [bend left = 30] node[above] {\scriptsize 0} (4)
    edge [bend left = 70] node[above] {\scriptsize 0} (4)
      edge [out=-270, in=-270, looseness=3] node[above] {\scriptsize 0} (4)
  (4) edge node {} (11)    
  (5) edge node[below, yshift=0.1cm] {\scriptsize 1} (6)
    edge [bend left = 30] node[above] {\scriptsize 0} (6)
    edge [bend left = 70] node[above] {\scriptsize 0} (6)
      edge [out=-270, in=-270, looseness=3] node[above] {\scriptsize 0} (6)
  (12) edge node {} (5)    
  ;
\draw[decorate, decoration={brace, amplitude=5pt}] (10.9,-0.2) -- (2.1,-0.2) node[midway,yshift=-0.3cm]{$k$};
\end{tikzpicture}}
\caption{One instance from the family described in the proof of Theorem~11.\label{fig:threshold-alg-instance}}
\end{minipage}
}
\end{figure}

A deterministic algorithm picks a threshold at position $i$. The expected value of the solution is
\begin{align*}
\mathbb{E}[w(S_{alg})] &\le 1 + \frac{k}{n} \sum_{\ell=1}^{\frac{i}{k}} \frac{k}{\ell}\le 1 + \frac{k^2}{n} \log \frac{i}{k}\le \frac{k^2}{n}\log \frac{n}{k} + 1\enspace,
\end{align*} 
where $\log$ denotes the natural logarithm and the expression results from observing that the algorithm cannot obtain more than a value of $1$ if its threshold $i$ falls above the valuable $1$'s. Otherwise it gets an additional fraction of $k$, depending on how close the threshold is positioned to the valuable $1$'s. For instance, if the threshold is set between $1$ and $k$ positions below the valuable $1$'s, the algorithm will in expectation select edges of total value of at least $1 + k/2$. This follows from the random arrival order of the edges and the fact that the ratio of valuable to non-valuable edges that the algorithm is ready to accept is at least $1:2$.
Furthermore, we see that for this distribution of instances the optimal way to set a deterministic threshold is at the lowest position. Using $k=\sqrt{n}$, a lower bound on the competitive ratio is
\[
\frac{k}{\frac{k^2}{n}\log \frac{n}{k} + 1} = \frac{n}{k\log \frac{n}{k} + \frac{n}{k}} = \Omega\left(\frac{\sqrt{n}}{\log n}\right)\enspace.\qedhere
\]
\end{proof}


\section{Matching and Packing}
\label{sec:matchingPacking}


\subsection{Bipartite Matching}\label{subsec:bipartite-matching}

In this section, we study online bipartite matching. The vertices on the right side of the graph (denoted by $R$) are static and given in advance. The vertices on the left side (denoted by $L$) arrive sequentially in a random order. 
Every edge $e = (r,\ell) \in R \times L$ has a non-negative weight $w(e) \ge 0$. In the cardinal model, each vertex of $L$ reveals upon arrival the weights of all incident edges. In the ordinal model, we are given a total order on all edges that have arrived so far, consistent with their weights. Before seeing the next vertex of $L$, the algorithm has to decide to which vertex $r \in R$ (if any) it wants to match the current vertex $\ell$. A match that is formed cannot be revoked. The goal is to maximize the total weight of the matching.

The algorithm for the cardinal model in~\cite{KesselheimRTV13} achieves an optimal competitive ratio of $e$. However, this algorithm heavily exploits cardinal information by repeatedly computing max-weight matchings for the edges seen so far. For the ordinal model, our Algorithm~\ref{alg:bipartite-ordinal} below obtains a competitive ratio of $2e$. While similar in spirit, the main difference is that we rely on a greedy matching algorithm, which is based solely on ordinal information. It deteriorates the ratio only by a factor of 2.

Here we assume to have access to ordinal preferences over all the edges in the graph. Note that the same approach works if the vertices provide correlated (ordinal) preference lists consistent with the edge weights, for every vertex from $R$ and every arrived vertex from $L$. In this case, the greedy algorithm can still be implemented by iteratively matching and removing a pair that mutually prefers each other the most, and it provides an approximation guarantee of 2 for the max-weight matching (see, e.g.,~\cite{AnshelevichH12}). In contrast, if we receive only preference lists for vertices on one side, there are simple examples that establish super-constant lower bounds on the competitive ratio\footnote{Consider a bipartite graph with two nodes on each side (named A,B and 1,2). If we only know that both A and B prefer 1 to 2, the ratio becomes at least 2 even in the offline case. Similar examples imply that the (offline) ratio must grow in the size of the graph.}. 

\begin{algorithm}[t]
\caption{Bipartite Matching}\label{alg:bipartite-ordinal}
\SetKwInOut{Input}{Input}\SetKwInOut{Output}{Output}

\Input{vertex set $R$ and cardinality $n=|L|$}
\Output{matching $M$}
\BlankLine
Let $L'$ be the first $\lfloor \frac{n}{e} \rfloor$ vertices of $L$, and $M\leftarrow \emptyset$\;
\For{each $\ell\in L\setminus L'$}{
	$L'\leftarrow L'\cup \{\ell\}$\;
  $M^{(\ell)}\leftarrow \text{greedy matching on } G[L'\cup R]$\; 	
	Let $e^{(\ell)}\leftarrow (\ell,r)$ be the edge assigned to $\ell$ in $M^{(\ell)}$\;
	\textbf{if} $M\cup \{e^{(\ell)}\}$ is a matching \textbf{then} add $e^{(\ell)}$ to $M$\;
}

\end{algorithm}

\begin{lemma}\label{bipartite-matching-lemma}
Let the random variable $A_v$ denote the contribution of the vertex $v\in L$ to the output, i.e. weight assigned to $v$ in $M$. Let $w(M^*)$ denote the value of the maximum-weight matching in $G$. For $\ell\in \{\ceil{\frac{n}{e} },\dots,n\}$,

$$\mathbb{E}\big[A_{\ell}\big]\geq \frac{\lfloor {\frac{n}{e}} \rfloor}{\ell-1}\cdot \frac{w(M^*)}{2n}\enspace.
$$
\end{lemma}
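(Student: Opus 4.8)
The plan is to follow the standard secretary-style analysis of Kesselheim et al., adapted to the greedy matching subroutine. Fix a vertex $\ell$ with $\ceil{n/e} \le \ell \le n$, and condition on the set $L' = \{\ell_1,\dots,\ell_\ell\}$ of the first $\ell$ arrived vertices of $L$ (but not on their internal order). Let $G[L' \cup R]$ be the induced graph and let $M^{(\ell)}$ be the greedy matching computed on it. The key structural fact is that the greedy matching is a $2$-approximation to $w(M^*_{L'})$, the max-weight matching on $G[L' \cup R]$, and moreover this bound is robust: for \emph{every} vertex $v \in L'$, if we run greedy on $G[L' \cup R]$ the edge assigned to $v$ (if any) has an expected weight, over the random choice of which vertex in $L'$ plays the role of $v$, at least $w(M^*_{L'})/(2\ell)$ — this is because greedy's total output is at least $w(M^*_{L'})/2$ and, by symmetry of the conditioning, each of the $\ell$ positions is equally likely to be any fixed vertex of $L'$. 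Since $M^*_{L'}$ restricted from the global optimum $M^*$ satisfies $\Ex[w(M^*_{L'})] \ge \frac{\ell}{n} w(M^*)$ (each edge of $M^*$ survives with probability $\ell/n$ when $\ell$ vertices are sampled), we get that the edge $e^{(\ell)}$ greedily assigned to the newly arrived vertex $\ell$ has $\Ex[w(e^{(\ell)})] \ge \frac{w(M^*)}{2n}$.

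Next I would bound the probability that this edge $e^{(\ell)} = (\ell, r)$ is actually \emph{added} to $M$, i.e.\ that $r$ is still unmatched by the partial matching $M$ built so far. As in the cardinal analysis, $r$ is free at step $\ell$ unless some earlier step $\ell' < \ell$ (with $\ell' \ge \ceil{n/e}$) assigned its arriving vertex to $r$. One shows $\Pr[r \text{ free at step } \ell \mid e^{(\ell)}=(\ell,r)] \ge \frac{\lfloor n/e\rfloor}{\ell-1}$: the event that step $\ell'$ matched into $r$ requires that $r$ was the greedy partner of the vertex arriving at position $\ell'$ in $G[L'_{\ell'} \cup R]$, and a telescoping/induction argument over $\ell' = \ceil{n/e}, \dots, \ell-1$ — identical in form to the one in~\cite{KesselheimRTV13} but using that $e^{(\ell)}$ depends only on the \emph{set} of the first $\ell$ vertices, not their order — bounds the cumulative blocking probability. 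Crucially, the identity of $e^{(\ell)}$ and the identity of the blocker at step $\ell'$ are, after conditioning on $L'$, functions only of which vertex landed in which of the first $\ell$ slots, so the two sources of randomness decompose cleanly. Combining the two bounds, $\Ex[A_\ell] = \Ex[w(e^{(\ell)}) \cdot \mathbf{1}\{r \text{ free}\}] \ge \frac{\lfloor n/e\rfloor}{\ell-1} \cdot \frac{w(M^*)}{2n}$, which is the claim.

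The main obstacle is making the independence claim in the second step fully rigorous: one must verify that, conditioned on the unordered set $L'$ of the first $\ell$ vertices, the random edge $e^{(\ell)}$ and the random history of which right-vertices got blocked at earlier steps are governed by a single uniform random ordering of $L'$, and that the telescoping argument only needs, at step $\ell'$, the event ``the $\ell'$-th vertex is greedily matched to $r$ in $G[\{\ell_1,\dots,\ell_{\ell'}\}\cup R]$'' whose probability is at most $1/\ell'$ by a symmetry argument over which vertex occupies slot $\ell'$. This mirrors the cardinal proof, but one has to be careful that replacing the max-weight matching subroutine with greedy does not break the symmetry — it does not, because greedy is a deterministic function of the induced graph, exactly as the max-weight matching routine is. The factor-$2$ loss enters precisely once, in the first step, via the $2$-approximation of greedy matching; everything else is the unchanged Kesselheim et al.\ argument.
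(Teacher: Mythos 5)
Your proof plan follows the same two-step decomposition as the paper's own argument: first bound $\Ex[w(e^{(\ell)})] \ge w(M^*)/(2n)$ via the greedy 2-approximation and the random choice of $L'$ and of the $\ell$-th vertex, then bound $\Pr[r \text{ free at step } \ell] \ge \lfloor n/e \rfloor/(\ell-1)$ by the telescoping product $\prod_{k=\ceil{n/e}}^{\ell-1}(1-1/k)$. The independence concerns you flag (the edge $e^{(\ell)}$ depending only on the unordered set of the first $\ell$ vertices, greedy being a deterministic function of the induced graph) are exactly the points the paper relies on implicitly, so your write-up is if anything slightly more explicit but not a different route.
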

\begin{proof}
We first show that $e^{(\ell)}$ has a significant expected weight. Then we bound the probability of adding $e^{(\ell)}$ to $M$.

In step $\ell$, $|L'|=\ell$ and the algorithm computes a greedy matching $M^{(\ell)}$ on $G[L'\cup R]$. The current vertex $\ell$ can be seen as selected uniformly at random from $L'$, and $L'$ can be seen as selected uniformly at random from $L$. Therefore, $\mathbb{E}[w(M^{(\ell)})]\geq \frac{\ell}{n}\cdot \frac{w(M^*)}{2}$ and $\mathbb{E}[w(e^{(\ell)})]\geq \frac{w(M^*)}{2n}$. Here we use that a greedy matching approximates the optimum by at most a factor of $2$~\cite{AnshelevichH12}.

Edge $e^{(\ell)}$ can be added to $M$ if $r$ has not been matched already. The vertex $r$ can be matched only when it is in $M^{(k)}$. The probability of $r$ being matched in step $k$ is at most $\frac{1}{k}$ and the order of the vertices in steps $1,\dots, k-1$ is irrelevant for this event.

\begin{align*}
\Pr[r \text{ unmatched in step } \ell] &= \Pr\Bigg[\bigwedge_{k=\ceil{n/e}}^{\ell-1} r\not\in e^{(k)}\Bigg]\geq \prod_{k=\ceil{n/e}}^{\ell-1} \frac{k-1}{k} = \frac{\ceil{\frac{n}{e}}-1}{\ell - 1}
\end{align*}

We now know that $\Pr[M\cup e^{(\ell)} \text{ is a matching}]\geq \frac{\lfloor n/e\rfloor}{\ell-1}$. Using this and $\mathbb{E}[w(e^{(\ell)})]\geq\frac{w(M^*)}{2n}$, the lemma follows. 
\end{proof}

\begin{theorem}
Algorithm~\ref{alg:bipartite-ordinal} for bipartite matching is $2e$-competitive. 
\end{theorem}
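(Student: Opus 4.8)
The plan is to simply aggregate the per-vertex estimate of Lemma~\ref{bipartite-matching-lemma} over the acceptance phase. First I would note that the output matching $M$ only ever receives edges $e^{(\ell)}$ with $\ell\in L\setminus L'$, so the sampled vertices contribute nothing and $w(M)=\sum_{\ell} A_\ell$, where $\ell$ ranges over the arrival positions $\ceil{n/e},\dots,n$. Taking expectations and inserting Lemma~\ref{bipartite-matching-lemma} gives
\[
\mathbb{E}[w(M)] \;=\; \sum_{\ell=\ceil{n/e}}^{n}\mathbb{E}[A_\ell]\;\ge\; \frac{\lfloor n/e\rfloor}{2n}\,w(M^*)\sum_{\ell=\ceil{n/e}}^{n}\frac{1}{\ell-1}\enspace.
\]

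The remaining work is to lower-bound the harmonic-type sum. Reindexing by $j=\ell-1$ and comparing with an integral,
\[
\sum_{\ell=\ceil{n/e}}^{n}\frac{1}{\ell-1} \;=\; \sum_{j=\ceil{n/e}-1}^{n-1}\frac1j \;\ge\; \int_{\ceil{n/e}-1}^{n}\frac{dx}{x} \;=\; \ln\frac{n}{\ceil{n/e}-1}\;\ge\;\ln\frac{n}{n/e}\;=\;1\enspace,
\]
where I used $\ceil{n/e}-1\le n/e$. Combining the two displays yields $\mathbb{E}[w(M)]\ge \frac{\lfloor n/e\rfloor}{2n}\,w(M^*)$, and since $\lfloor n/e\rfloor\ge n/e-1$ this is at least $\big(\tfrac{1}{2e}-\tfrac{1}{2n}\big)w(M^*)$. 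Hence $w(M^*)/\mathbb{E}[w(M)]\to 2e$, i.e.\ the algorithm is $2e$-competitive (up to a vanishing additive term in $n$).

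The argument is essentially bookkeeping, so there is no real obstacle here; the genuine work was already done in Lemma~\ref{bipartite-matching-lemma} (the greedy factor-$2$ bound together with the unmatched-probability product). The only points that need a little care are (i) verifying that the telescoping product $\tfrac{\ceil{n/e}-1}{\ell-1}$ summed against $\tfrac{1}{\ell-1}$ over the acceptance phase collapses to a constant that is at least $1$ despite the rounding in $\ceil{n/e}$, and (ii) being explicit about the lower-order loss from $\lfloor n/e\rfloor$ versus $n/e$, which is why the clean constant $2e$ is attained only asymptotically. If an exact bound for every $n$ is preferred, one can state the competitive ratio as $2e+O(1/n)$.
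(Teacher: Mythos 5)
Your proof is correct and follows essentially the same route as the paper: aggregate $\mathbb{E}[A_\ell]$ over the acceptance phase, reindex the sum $\sum_{\ell=\ceil{n/e}}^n \frac{1}{\ell-1}$, lower-bound it by $\ln e = 1$ via an integral comparison, and convert $\lfloor n/e\rfloor/n$ to $1/e - 1/n$. Your closing remark that the exact constant $2e$ is attained only asymptotically is also the (implicit) reading of the paper's final display.
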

\begin{proof}
The weight of matching $M$ can be obtained by summing over random variables $A_{\ell}$.
\begin{align*}
\mathbb{E}[w(M)]&=\mathbb{E}\Bigg[\sum_{\ell=1}^n A_{\ell}\Bigg] \geq \sum_{\ell=\ceil{n/e}}^n \frac{\lfloor n/e \rfloor}{\ell-1}\cdot \frac{w(M^*)}{2n}= \frac{\lfloor n/e\rfloor}{2n} \sum_{\ell=\lfloor n/e \rfloor}^{n-1} \frac{1}{\ell}\cdot w(M^*) 
\end{align*}

Since $\frac{\lfloor n/e\rfloor}{n}\geq \frac{1}{e} - \frac{1}{n}$ and $\sum_{\ell=\lfloor n/e\rfloor}^{n-1} \frac{1}{\ell}\geq \ln \frac{n}{\lfloor n/e\rfloor}\geq 1$, it follows that $$\mathbb{E}[w(M)]\geq \bigg(\frac{1}{e}-\frac{1}{n}\bigg)\cdot \frac{w(M^*)}{2}\enspace.$$
\end{proof}

In the submodular version of the offline problem, the natural greedy algorithm gives a $3$-approximation \cite{FisherNW78}. It builds the matching by greedily adding an edge that maximizes the marginal improvement of $f$, which is the information delivered by the ordinal oracle. When using this algorithm as a subroutine for the bipartite matching secretary problem, the resulting procedure achieves a $12$-approximation in the submodular case~\cite{KesselheimT17}.


\subsection{Packing}\label{subsec:packing}

\newcommand{\bfc}{\mathbf{c}}
\newcommand{\bfx}{\mathbf{x}}
\newcommand{\bfy}{\mathbf{y}}
\newcommand{\bfb}{\mathbf{b}}
\newcommand{\bfA}{\mathbf{A}}
\newcommand{\bfZero}{\mathbf{0}}
\newcommand{\bfOne}{\mathbf{1}}

Our results for bipartite matching can be extended to online packing LPs of the form $\max \bfc^{\tau}\bfx$ s.t. $\bfA\bfx\le \bfb$ and $\bfZero \le \bfx \le \bfOne$, which model problems with $m$ resources and $n$ online requests coming in random order. Each resource $i\in [m]$ has a capacity $b_i$ that is known in advance, together with the number of requests. Every online request comes with a set of options, where each option has its profit and resource consumption. Once a request arrives, the coefficients of its variables are revealed and the assignment to the variables has to be determined.

Formally, request $j\in [n]$ corresponds to variables $x_{j,1},\dots, x_{j,K}$ that represent $K$ options. Each option $k\in[K]$ contributes with profit $c_{j,k}\ge0$ and has resource consumption $a_{i,j,k}\ge0$ for resource $i$. Overall, at most one option can be selected, i.e., there is a constraint $\sum_{k\in[K]} x_{j,k}\le1, \forall j\in[n]$. The objective is to maximize total profit while respecting the resource capacities. The offline problem is captured by the following linear program:
\begin{align*}
\max \sum_{j\in [n]}\sum_{k\in[K]} c_{j,k}x_{j,k}\quad \text{ s.t. }\quad &\sum_{j\in[n]}\sum_{k\in[K]} a_{i,j,k}x_{j,k} \le b_i &i\in[m]\\
&\sum_{k\in[K]} x_{j,k} \le 1 &j\in[n]
\end{align*}

As a parameter, we denote by $d$ the maximum number of non-zero entries in any column of the constraint matrix $\bfA$, for which by definition $d\le m$. We compare the solution to the fractional optimum, which we denote by $\bfx^*$. The competitive ratio will be expressed in terms of $d$ and the capacity ratio $B= \min_{i\in[m]}\left \lfloor\frac{b_i}{\max_{j\in[n],k\in[K]}a_{i,j,k}}\right \rfloor$.



Kesselheim et al.~\cite{KesselheimRTV13} propose an algorithm that heavily exploits cardinal information -- it repeatedly solves an LP-relaxation and uses the solution as a probability distribution over the options. Instead, our Algorithm~\ref{alg:packing} for the ordinal model is based on greedy assignments in terms of profits $c_{j,k}$. More specifically, the greedy assignment considers variables $x_{j,k}$ in non-increasing order of $c_{j,k}$. It sets a variable to $1$ if this does not violate the capacity constraints, and to 0 otherwise.

\begin{algorithm}[t]
\caption{Packing LP}\label{alg:packing}
\SetKwInOut{Input}{Input}\SetKwInOut{Output}{Output}

\Input{capacities $\bf{b}$, total number of requests $n$, probability $p=\frac{e(2d)^{1/B}}{1+e(2d)^{1/B}}$}
\Output{assignment vector $\bfy$}
\BlankLine
Let $L'$ be the first $p\cdot n$ requests, and $\bfy \leftarrow \bfZero$\;
\For{each $j \notin L'$}{
	$L'\leftarrow L'\cup \{j\}$\; 
	$\bfx^{(L')}\leftarrow \text{greedy assignment on the LP for } L'$\;
	$\bfy_j \leftarrow \bfx_j^{(L')}$\;
	\textbf{if} $\lnot (\bfA(\bfy) \le \bfb)$ \textbf{then} $\bfy_j \leftarrow \bfZero$\;
}
\end{algorithm}

\begin{theorem}\label{thm:packing}
Algorithm~\ref{alg:packing} for online packing LPs is $O(d^{(B+1)/B})$-competitive.
\end{theorem}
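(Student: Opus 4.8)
The plan is to mirror the bipartite matching analysis (Lemma~\ref{bipartite-matching-lemma} and the theorem following it): bound the expected profit each online request contributes to the output, then sum. Let $A_j = \sum_k c_{j,k} y_{j,k}$ be the contribution of request $j$, so that $\mathbb{E}[\bfc^{\tau}\bfy] = \sum_j \mathbb{E}[A_j]$; the goal is to show $\mathbb{E}[A_j] = \Omega\!\left(\bfc^{\tau}\bfx^*/(d\,n)\right)$ for every $j$ outside the initial sample $L'$.

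There are two ingredients. (i) \emph{Greedy value.} The offline greedy assignment is an $O(d)$-approximation to the LP optimum on its input set --- a classical fact for column-$d$-sparse packing (to obtain this cleanly the greedy rule should scan variables in order of profit per unit of the tightest relative resource load, or one normalizes profits accordingly). The LP optimum restricted to a uniformly random $t$-subset has expected value at least $(t/n)\,\bfc^{\tau}\bfx^*$, since restricting $\bfx^*$ to those variables stays feasible. In step $\ell$ of Algorithm~\ref{alg:packing} with $t = |L'|$, the current request $j$ is uniform in $L'$ and $L'$ is uniform among $t$-subsets, so averaging gives $\mathbb{E}\!\left[\sum_k c_{j,k} x^{(L')}_{j,k}\right] = \tfrac{1}{t}\,\mathbb{E}[\bfc^{\tau}\bfx^{(L')}] \ge \tfrac{1}{O(d)}\cdot\tfrac{\bfc^{\tau}\bfx^*}{n}$. (ii) \emph{Survival.} I bound the probability that the capacity check zeroes out $\bfy_j$. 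For each of the at most $d$ resources $i$ with $a_{i,j,k}>0$, the chosen option consumes at most $\max_{j',k'} a_{i,j',k'} \le b_i/B$. The consumption of resource $i$ by requests accepted before $j$ is a sum, over the online steps $s<\ell$, of terms with conditional expectation at most $b_i/s$ (again because at step $s$ a uniformly random request of a uniformly random set is processed and greedy is feasible there), hence has expectation at most $b_i\ln(1/p)$. With $p = \tfrac{e(2d)^{1/B}}{1+e(2d)^{1/B}}$ this is at most $b_i/\!\left(e(2d)^{1/B}\right)$, and a Chernoff-type tail bound for a sum of terms each bounded by $b_i/B$ (so that $B$ of them fit below capacity) makes the probability that resource $i$'s consumption exceeds $b_i(1-1/B)$ at most $\tfrac{1}{2d}$; a union bound over the $\le d$ relevant resources gives survival probability at least $\tfrac12$.

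Combining (i) and (ii) --- handling the mild correlation between the value of $\bfx^{(L')}_j$ and the survival event exactly as in Lemma~\ref{bipartite-matching-lemma}, by conditioning on the option assigned to $j$ and noting that the arrival order of the earlier online requests is irrelevant to it --- yields $\mathbb{E}[A_j] = \Omega\!\left(\bfc^{\tau}\bfx^*/(d\,n)\right)$. Summing over the $(1-p)n = n/(1+e(2d)^{1/B})$ online requests gives $\mathbb{E}[\bfc^{\tau}\bfy] = \Omega\!\left(\bfc^{\tau}\bfx^*/(d\,(2d)^{1/B})\right) = \Omega\!\left(\bfc^{\tau}\bfx^*/d^{(B+1)/B}\right)$, which is the claimed competitive ratio.

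I expect the survival step to be the crux. Unlike bipartite matching, where ``resource full'' is simply ``the endpoint was already matched'' --- a union of $t-1$ events each of probability $\le 1/k$ --- here one must control an accumulating, \emph{dependent} random resource consumption (the dependence coming from greedy's dependence on the entire set $L'$) and show by concentration that the check passes with constant probability, with the sample fraction $p$ tuned precisely to balance this against the factor lost to sampling. The small-$B$ regime (in particular $B=1$, where the Chernoff bound degenerates and a single prior request can already fill a resource) will need a separate, more delicate argument.
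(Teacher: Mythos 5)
Your high-level structure matches the paper's: decompose $\bfc^{\tau}\bfy$ into per-request contributions $A_j$, bound the greedy value of the option assigned to $j$ (the paper also reduces this to the bipartite-matching argument and a $(d+1)$-approximation factor for the profit-ordered greedy), and multiply by the probability that the capacity check survives. The greedy-value step (i) is essentially the paper's, modulo constants.

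The genuine gap is in step (ii), and you have flagged it yourself without resolving it. The paper does not use any Chernoff-type concentration. Instead it relaxes the survival event to ``every affected constraint $b_i$ was touched by at most $B-1$ earlier online requests'' (sufficient because each touch consumes at most $b_i/B$), establishes the per-step bound $\Pr[b_i \text{ touched at step } s] \le B/s$ via the uniform-last-element trick and feasibility of the greedy solution, and then bounds $\Pr[b_i \text{ touched} \ge B \text{ times}]$ by a first-moment union bound over all $B$-subsets $C\subseteq\{pn+1,\dots,j-1\}$:
\[
\Pr[b_i \text{ touched} \ge B \text{ times}] \le \sum_{|C|=B}\prod_{s\in C}\frac{B}{s} \le \binom{(1-p)n}{B}\Bigl(\frac{B}{pn}\Bigr)^{B}\le \Bigl(\frac{e(1-p)}{p}\Bigr)^{B}.
\]
With the chosen $p$ this equals $1/(2d)$, and a union bound over the $\le d$ affected constraints gives survival probability $\ge 1/2$ \emph{uniformly for all} $B\ge 1$. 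Your Chernoff route breaks precisely where you fear: for $B=1$ the target threshold $b_i(1-1/B)$ is $0$, so ``survival'' becomes ``no earlier request touched $b_i$ at all,'' which cannot be obtained from a tail bound on accumulated consumption; and for small $B$ the tail bound does not give the needed $1/(2d)$ without additional work. The paper's combinatorial argument sidesteps concentration entirely and is the missing idea. Note also that the per-step bounds $B/s$ are conditional (they hold given the earlier arrival order), which is exactly what licenses the product inside the subset sum; your sketch of a Chernoff bound would need the same care about dependence, which you gesture at but do not carry out.
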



\subsection{Matching in General Graphs}\label{subsec:general-matching}

\begin{algorithm}[b]
\caption{General Matching}\label{alg:matching-ordinal}
\SetKwInOut{Input}{Input}\SetKwInOut{Output}{Output}

\Input{vertex set $V$ and cardinality $n=|V|$}
\Output{matching $M$}
\BlankLine
Let $R$ be the first $\lfloor \frac{n}{2}\rfloor$ vertices of $V$\;
Let $L'$ be the further $\lfloor \frac{n}{2e} \rfloor$ vertices of $V$, and $M\leftarrow \emptyset$\;
\For{each $\ell\in V\setminus L'$}{
	$L'\leftarrow L'\cup \{\ell\}$\;
  $M^{(\ell)}\leftarrow \text{greedy matching on } G[L'\cup R]$\; 
	Let $e^{(\ell)}\leftarrow (\ell,r)$ be the edge assigned to $\ell$ in $M^{(\ell)}$\;
	\textbf{if} $M\cup \{e^{(\ell)}\}$ is a matching \textbf{then} add $e^{(\ell)}$ to $M$\;
}

\end{algorithm}

Here we study the case when vertices of a general undirected graph arrive in random order. In the beginning, we only know the number $n$ of vertices. Each edge in the graph has a non-negative weight $w(e) \ge 0$. Each vertex reveals the incident edges to previously arrived vertices and their weights (cardinal model), or we receive a total order over all edges among arrived vertices that is consistent with the weights (ordinal model). An edge can be added to the matching only in the round in which it is revealed. The goal is to construct a matching with maximum weight.

We can tackle this problem by prolonging the sampling phase and dividing the vertices into ``left'' and ``right'' vertices. Algorithm~\ref{alg:matching-ordinal} first samples $n/2$ vertices. These are assigned to be the set $R$, corresponding to the static side of the graph in bipartite matching. The remaining vertices are assigned to be the set $L$. The algorithm then proceeds by sampling a fraction of the vertices of $L$, forming a set $L'$. The remaining steps are exactly the same as in Algorithm~\ref{alg:bipartite-ordinal}.

\begin{theorem}\label{thm:generalMatching}
Algorithm~\ref{alg:matching-ordinal} for matching in general graphs is $12e/(e+1)$-competitive, where $12e/(e+1) < 8.78$.
\end{theorem}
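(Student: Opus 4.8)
The plan is to reduce the analysis of Algorithm~\ref{alg:matching-ordinal} to the bipartite case already handled in Lemma~\ref{bipartite-matching-lemma} and its corollary. First I would fix an optimal matching $M^*$ in the general graph $G$ and condition on the random partition of $V$ into the ``right'' set $R$ (the first $\lfloor n/2\rfloor$ arrivals) and the ``left'' set $V\setminus R$. For each edge $e=\{u,v\}\in M^*$, the probability that $u$ lands in $R$ and $v$ in $V\setminus R$ (or vice versa) is at least $\tfrac14$ (more precisely $\tfrac{\lfloor n/2\rfloor}{n}\cdot\tfrac{n-\lfloor n/2\rfloor}{n-1}\ge \tfrac14$), so in expectation the edges of $M^*$ that ``cross'' the partition have weight at least $w(M^*)/4$. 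Call this crossing sub-matching $M^*_{\mathrm{cross}}$; it is a feasible matching in the bipartite graph $G[R, V\setminus R]$ on which the rest of the algorithm operates.

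Next I would observe that, conditioned on the partition, Algorithm~\ref{alg:matching-ordinal} restricted to the vertices in $V\setminus R$ is \emph{exactly} Algorithm~\ref{alg:bipartite-ordinal} run on the bipartite graph $G[R,\,V\setminus R]$ with the online side $L=V\setminus R$ of size $n' = n-\lfloor n/2\rfloor \approx n/2$, sampling fraction $\lfloor n/(2e)\rfloor = \lfloor n'/e\rfloor$ (up to the usual floor discrepancies), and the induced ordinal information on the edges among arrived vertices. Hence the $2e$-competitiveness of Algorithm~\ref{alg:bipartite-ordinal} (the theorem following Lemma~\ref{bipartite-matching-lemma}) applies verbatim, but with the benchmark being the best bipartite matching in $G[R,V\setminus R]$, which is at least $w(M^*_{\mathrm{cross}})$. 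Taking expectations over the partition and chaining the two bounds gives
\[
\Ex[w(M)] \;\ge\; \frac{1}{2e}\,\Ex\big[w(M^*_{\mathrm{cross}})\big] \;\ge\; \frac{1}{2e}\cdot\frac14\, w(M^*)\,,
\]
i.e.\ an $8e$-competitive bound; to get the sharper constant $12e/(e+1)$ I would keep the exact crossing probability and the exact harmonic-sum estimate from Lemma~\ref{bipartite-matching-lemma} rather than the crude $1/4$ and $\ln(n/\lfloor n/e\rfloor)\ge1$ bounds, which tightens the product to $12e/(e+1)<8.78$.

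The one genuinely delicate point is making sure the conditioning is clean: the random order within $R$ does not matter (greedy matching on $G[L'\cup R]$ only depends on $R$ as a set, not its internal order), and the random order within $V\setminus R$ is a uniformly random order independent of which vertices are in $R$, so after conditioning on the partition we really are in the setting of Lemma~\ref{bipartite-matching-lemma} with no residual dependence. The other bookkeeping obstacle is reconciling the floors: the bipartite analysis wants a sample of size $\lfloor n'/e\rfloor$ with $n'=|L|$, whereas Algorithm~\ref{alg:matching-ordinal} samples $\lfloor n/(2e)\rfloor$ vertices of $L$; since $n' = n-\lfloor n/2\rfloor \in \{\lceil n/2\rceil\}$, these differ by at most a constant number of vertices, which only perturbs the harmonic sum by a lower-order term and is absorbed (this is why the statement bounds the constant by $8.78$ rather than claiming exact equality $12e/(e+1)$ for all $n$). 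So the proof is essentially: crossing-edge argument for the factor related to the partition, then a black-box invocation of the bipartite result for the factor $2e$, then a careful (rather than lossy) combination of the two to land the stated constant.
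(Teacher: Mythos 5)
Your approach is genuinely different from the paper's. You propose a black-box reduction: condition on the random split $(R, V\setminus R)$, observe that the remainder of Algorithm~\ref{alg:matching-ordinal} is exactly Algorithm~\ref{alg:bipartite-ordinal} on $G[R, V\setminus R]$, invoke the $2e$-competitiveness of the bipartite result for the conditional expectation, and then account for the fraction of $w(M^*)$ that survives as a bipartite benchmark via a crossing argument. The paper instead re-derives the analogue of Lemma~\ref{bipartite-matching-lemma} directly for the general algorithm, with step indices $\ell$ ranging over the full arrival sequence (from $\lceil n/2 + n/(2e)\rceil$ to $n$) and an extra $1/2$ factor for the probability that an optimal edge crosses the bipartition; it then sums the per-step contributions and evaluates a harmonic sum over this range. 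Both use the same conceptual ingredients (crossing/availability probability $\approx 1/2$; the bipartite machinery), but the paper does not treat the bipartite theorem as a black box.

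There are two concrete problems with your version. First, the crossing probability is off by a factor of two: the quantity $\frac{\lfloor n/2\rfloor}{n}\cdot\frac{n-\lfloor n/2\rfloor}{n-1}$ is the probability of $u\in R,\ v\notin R$ in \emph{one} direction only, so the ``or vice versa'' event you describe has probability twice that, namely $\frac{2\lfloor n/2\rfloor(n-\lfloor n/2\rfloor)}{n(n-1)}\approx \frac{1}{2}$. (This is precisely the ``$1/2$'' the paper uses.) So $\Ex[w(M^*_{\mathrm{cross}})]\ge w(M^*)/2$, not $w(M^*)/4$, and your crude chain gives $4e$, not $8e$.

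Second, and more substantively, the black-box chaining does \emph{not} tighten to $12e/(e+1)$ as you claim. With the corrected crossing probability $1/2$ and the bipartite guarantee $\Ex[w(M)\mid R]\ge \frac{1}{2e}\, w(M^*_{\mathrm{bip}}(R))$ (which is asymptotically tight --- the harmonic-sum estimate $\ln(n'/\lfloor n'/e\rfloor)\ge 1$ in the bipartite proof cannot be meaningfully sharpened), chaining gives $\Ex[w(M)]\ge \frac{1}{4e}\,w(M^*)$, i.e.\ a $4e\approx 10.87$ bound, which is strictly weaker than $12e/(e+1)<8.78$. The improvement in the paper comes from keeping $\ell$ indexed by the \emph{full} step count: the effective discard phase is $\lceil n/2+n/(2e)\rceil$ out of $n$ (a fraction $\frac{e+1}{2e}$, larger than $1/e$), which changes the prefactor in front of the harmonic sum, and the paper balances this against the now-smaller tail sum $\sum_{\ell=\lfloor n/2+n/(2e)\rfloor}^{n-1}\frac{1}{\ell}\ge \frac{1}{3}$. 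Treating the bipartite theorem as a black box erases this trade-off, so the stated constant cannot be recovered along your route; to land $12e/(e+1)$ you would have to re-open the bipartite lemma and redo the probability and harmonic-sum bookkeeping for the general algorithm's actual step indexing, which is exactly what the paper's appendix lemma does.
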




\section{Independent Set and Local Independence}\label{sec:independentSet}

In this section, we study maximum independent set in graphs with bounded local independence number. The set of elements are the vertices $V$ of an underyling undirected graph $G$. Each vertex has a weight $w_v \ge 0$. We denote by $N(v)$ the set of direct neighbors of vertex $v$. Vertices arrive sequentially in random order and reveal their position in the order of weights of vertices seen so far. The goal is to construct an independent set of $G$ with maximum weight. The exact structure of $G$ is unknown, but we know that $G$ has a bounded local independence number $\alpha_1$.

\begin{definition}
An undirected graph $G$ has local independence number $\alpha_1$ if for each node $v$, the cardinality of every independent set in the neighborhood $N(v)$ is at most $\alpha_1$.
\end{definition}

We propose Algorithm~\ref{alg:sample-and-price}, which is inspired by the Sample-and-Price algorithm for matching in~\cite{KorulaP09}. Note that G\"obel et al.~\cite{GoebelHKSV14} construct a more general approach for graphs with bounded \emph{inductive} independence number $\rho$. However, they only obtain a ratio of $O(\rho^2 \log n)$ for the weighted version, where a competitive ratio of $\Omega(\log n/\log^2 \log n)$ cannot be avoided, even in instances with constant $\rho$. These algorithms rely on $\rho$-approximation algorithms for the offline problem that crucially exploit cardinal information.

\begin{algorithm}[t]
\caption{Independent Set in Graphs with Bounded Local Independence Number}\label{alg:sample-and-price}
\SetKwInOut{Input}{Input}\SetKwInOut{Output}{Output}

\Input{$n=|G|$, $p=\sqrt{\alpha_1/(\alpha_1 + 1)}$}
\Output{independent set of vertices $S$}
\BlankLine
Set $k\leftarrow \text{Binom}(n,p)$, $S\leftarrow\emptyset$\;
Reject first $k$ vertices of $G$, denote this set by $G'$\;
Build a maximal independent set of vertices from $G'$ greedily, denote this set by $M_1$\;
\For{each $v\in G\setminus G'$}{
	$w^* \leftarrow \max \{ w \mid \mathcal{N}(v)\cap M_1\}$\;
	\textbf{if} $(v>w^*) \wedge (S\cup \{v\}$ independent set) \textbf{then} add $v$ to $S$\;
}
\end{algorithm}

\begin{algorithm}[t]
\caption{Simulate }\label{alg:simulate}
\SetKwInOut{Input}{Input}\SetKwInOut{Output}{Output}

\Input{$n=|G|$, $p=\sqrt{\alpha_1/(\alpha_1 + 1)}$}
\Output{independent set of vertices $S$}
\BlankLine
Sort all vertices in $G$ in non-increasing order of value\;
Initialize $M_1, M_2 \leftarrow \emptyset$\;
\For{each $v\in G$ in sorted order}{
	\If{$M_1\cup \{v\}$ independent set}{
	flip a coin with probability $p$ of heads\;
	\textbf{if} heads \textbf{then} $M_1\leftarrow M_1\cup \{v\}$; \textbf{else} $M_2\leftarrow M_2\cup \{v\}$\;
	}
}
$S \leftarrow M_2$\;
\For{each $w\in S$}{
	\textbf{if} $w$ has neighbors in $S$ \textbf{then} remove $w$ and all his neighbors from $S$\;
}

\end{algorithm}

Similar to \cite{KorulaP09}, we reformulate Algorithm~\ref{alg:sample-and-price} into an equivalent approach (Algorithm~\ref{alg:simulate}) for the sake of analysis. Given the same arrival order, the same vertices are in the sample. Algorithm~\ref{alg:simulate} drops all vertices from $S$ that have neighbors in $S$ while Algorithm~\ref{alg:sample-and-price} keeps one of them. Hence, $\Ex[w(S_{Alg_{\ref{alg:sample-and-price}}})]\ge \Ex[w(S_{Sim})]$. In what follows, we analyze the performance of Simulate. The first lemma follows directly from the definition of the local independence number.

\begin{lemma}
$\mathbb{E}[w(M_1)] \ge p\cdot \frac{w(S^*)}{\alpha_1}$, where $\alpha_1 \ge 1$ is the local independence number of $G$.
\end{lemma}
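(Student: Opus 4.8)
\textbf{Proof plan for the lemma $\mathbb{E}[w(M_1)] \ge p\cdot w(S^*)/\alpha_1$.}

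The plan is to relate the greedy maximal independent set $M_1$ produced by \texttt{Simulate} to an optimal independent set $S^*$ via a charging argument that exploits the local independence number. First I would analyze the deterministic skeleton of \texttt{Simulate}: vertices are processed in non-increasing order of weight, and a vertex $v$ is a \emph{candidate} precisely when $M_1 \cup \{v\}$ is independent at the moment $v$ is considered; each candidate then lands in $M_1$ with probability $p$ (independently) and in $M_2$ otherwise. Let $C$ denote the (random) set of candidates. The key structural observation is that $C$ is itself a maximal independent set of $G$: every vertex of $G$ is either a candidate or has an earlier-processed (hence higher- or equal-weight) neighbor that is a candidate — this follows because a non-candidate $v$ fails the test $M_1\cup\{v\}$ independent only if some neighbor is already in $M_1 \subseteq C$, and such a neighbor was processed before $v$.

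Next I would push this to the charging argument. For each $u \in S^*$, I want to charge its weight $w_u$ to a candidate of weight at least $w_u$. Since $C$ is maximal, either $u \in C$, or $u$ has a neighbor $c \in C$ that was processed no later than $u$, hence $w_c \ge w_u$ (ties broken consistently with processing order). So map each $u \in S^*$ to such a $c =: \phi(u) \in C \cap (\{u\} \cup N(u))$ with $w_{\phi(u)} \ge w_u$. Now bound how many elements of $S^*$ map to a single candidate $c$: all such $u$ lie in $\{c\} \cup N(c)$ and form an independent set (they are a subset of $S^*$), so by the definition of local independence number $|\phi^{-1}(c)| \le \alpha_1$ (the $\alpha_1$ bounds independent sets in $N(c)$; adding $c$ itself is handled by noting $c\in\phi^{-1}(c)$ forces all other preimages to lie in $N(c)$, or one absorbs the $+1$ into the bound — a minor case check). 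Summing, $w(S^*) = \sum_{u\in S^*} w_u \le \sum_{c\in C} |\phi^{-1}(c)| \cdot w_c \le \alpha_1 \sum_{c \in C} w_c = \alpha_1 \, w(C)$. Hence $w(C) \ge w(S^*)/\alpha_1$ deterministically, for every realization of the processing.

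Finally I would take expectations over the coin flips. Conditioned on the entire processing (equivalently, on the set $C$ of candidates — note $C$ does not depend on the coins, only on the fixed vertex ordering, since the candidacy test references $M_1$ but a vertex is a candidate iff none of its neighbors is a candidate-that-came-before, which unrolls to a coin-free recursion), each $c \in C$ is placed in $M_1$ independently with probability $p$, so $\mathbb{E}[w(M_1) \mid C] = p\, w(C) \ge p\, w(S^*)/\alpha_1$. Taking the outer expectation gives the claim. The one subtlety to get right — and the place I expect the main obstacle — is the clean justification that the candidate set $C$ is determined independently of the coin flips: one must verify that "$M_1 \cup \{v\}$ independent when $v$ is processed" is equivalent to "$v$ has no neighbor among the candidates processed before $v$," so that $M_1$ only ever differs from the relevant prefix of $C$ by coin outcomes that never affect future candidacy. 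Once that is pinned down, the charging bound and the conditioning step are routine.
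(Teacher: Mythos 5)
Your argument has a genuine gap exactly at the ``one subtlety'' you flagged at the end. The claim that the candidate set $C$ is determined by a coin-free recursion is false: the candidacy test in \texttt{Simulate} is against $M_1$, not against $C$. A candidate $u$ that draws \emph{tails} lands in $M_2$, and so does not block a later neighbor $v$ from passing the test $M_1\cup\{v\}$ independent. Take a path $a-b-c$ with $w(a)>w(b)>w(c)$: $a$ is always a candidate, but $b$ is a candidate iff $a$'s coin is tails, and $c$'s candidacy depends on both earlier coins. Consequently $C$ is random, and for the same reason $C$ is not in general an independent set (both $a$ and $b$ can be candidates). This also breaks the conditional claim $\mathbb{E}[w(M_1)\mid C]=p\,w(C)$: conditioning on, say, $C=\{a,b,c\}$ reveals that $a$ and $b$ both drew tails, so $\mathbb{E}[w(M_1)\mid C]=p\,w_c\neq p\,w(C)$.

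The good news is that the remaining pieces of your argument are sound and can be rearranged into a correct proof. Your charging map $\phi\colon S^*\to C$ never used that $C$ is independent or maximal, only that every non-candidate $u$ has a strictly earlier (hence heavier) neighbor in $M_1\subseteq C$, and that $\phi^{-1}(c)\subseteq(\{c\}\cup N(c))\cap S^*$ is an independent set of size at most $\alpha_1$. Hence $w(C)\ge w(S^*)/\alpha_1$ holds \emph{pointwise} for every realization. And although the conditional identity fails, the unconditional one holds: for each $v$, the event ``$v$ is a candidate'' depends only on the coins of vertices processed before $v$, so it is independent of $v$'s own coin, giving $\Pr[v\in M_1]=p\cdot\Pr[v\in C]$ and thus $\mathbb{E}[w(M_1)]=p\cdot\mathbb{E}[w(C)]\ge p\cdot w(S^*)/\alpha_1$. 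This repaired version is correct. The paper itself gives no explicit proof (it states the lemma ``follows directly from the definition of the local independence number''); the intended argument is presumably the cleaner route of observing that $M_1$ is exactly the greedy weight-ordered maximal independent set of the random sample $R$ of coin-heads vertices, charging $S^*\cap R$ into $M_1$ to get $w(M_1)\ge w(S^*\cap R)/\alpha_1$, and taking expectations using $\mathbb{E}[w(S^*\cap R)]=p\,w(S^*)$ — which sidesteps the candidate-set subtleties entirely.
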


\begin{lemma}\label{lemma:conflicts}
$\Ex\big[\lvert \mathcal{N}(v)\cap M_2\rvert \bigm| v\in M_2\big]\le \frac{\alpha_1 (1-p)}{p}$\enspace.
\end{lemma}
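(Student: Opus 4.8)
The plan is to condition on the event $v \in M_2$ and count, in expectation, how many neighbors of $v$ also end up in $M_2$. The key structural fact is the one already used for $M_1$: in the sorted order, a vertex $u$ is \emph{considered} (i.e.\ the coin is flipped for it) only if $M_1 \cup \{u\}$ is independent at the time $u$ is processed, i.e.\ $u$ has no neighbor among the vertices already placed in $M_1$. So I would first argue that among the neighbors $N(v)$, the ones that are considered form, at each point in time, an independent set in the neighborhood $N(v)$, hence have size at most $\alpha_1$ by the definition of the local independence number. More precisely, at the moment just before $v$ is processed, let $C \subseteq N(v)$ be the set of neighbors of $v$ that have been considered so far and landed in $M_2$ (not $M_1$); I claim $C$ is an independent set: if two of them were adjacent, the later-processed one would have had its earlier-processed neighbor already... wait, no — that neighbor is in $M_2$, not $M_1$, so it would not block consideration. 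So independence of $C$ is not immediate and needs the following cleaner argument instead.

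The cleaner route: fix the sorted order and reveal the coins one vertex at a time. For a neighbor $u$ of $v$, let $B_u$ be the event that at the time $u$ is processed, no neighbor of $u$ is yet in $M_1$ (so $u$ is considered). Then $\Pr[u \in M_1 \mid B_u] = p$ and $\Pr[u \in M_2 \mid B_u] = 1-p$, and $u$ can only be in $M_2$ if $B_u$ holds. Now, the neighbors of $v$ that are \emph{placed in $M_1$} form an independent set in $N(v)$ (any two adjacent ones: the later one would have been blocked), so at most $\alpha_1$ of the neighbors of $v$ ever enter $M_1$. Every neighbor $u$ of $v$ that is considered but \emph{not} in $M_1$ goes to $M_2$, so the number of considered neighbors is (number in $M_1$) $+$ (number in $M_2$ among considered). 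For each considered neighbor, it lands in $M_1$ with probability $p$ independently of the coin, so heuristically the expected count in $M_1$ is $p$ times the expected number considered, while the expected count going to $M_2$ is $(1-p)$ times that. Since the count in $M_1$ is at most $\alpha_1$ always, $\Ex[\text{considered neighbors} \mid v \in M_2] \le \alpha_1 / p$ (here I would use that, conditioned on $v$ being considered and going to $M_2$, each previously-considered neighbor still independently had probability $p$ of entering $M_1$, and the $M_1$-neighbors number at most $\alpha_1$ surely, so the total number considered has conditional expectation at most $\alpha_1/p$). Hence $\Ex[|N(v) \cap M_2| \mid v \in M_2] \le (1-p) \cdot \alpha_1/p = \alpha_1(1-p)/p$, which is the claim.

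Carrying this out, the steps in order are: (1) set up the sorted-order process and the "considered" events $B_u$; (2) observe that $u \in M_1$ or $u \in M_2$ only if $B_u$, and $\Pr[u \in M_1 \mid B_u] = p$ with the coin independent of everything processed earlier; (3) show neighbors of $v$ landing in $M_1$ form an independent set in $N(v)$, hence number at most $\alpha_1$ with probability $1$; (4) deduce that the number of \emph{considered} neighbors of $v$, conditioned on $v \in M_2$, has expectation at most $\alpha_1/p$, by a martingale/stopping argument: process neighbors of $v$ in sorted order, each considered one contributes an independent $\mathrm{Bernoulli}(p)$ to the $M_1$-count which is capped at $\alpha_1$; (5) multiply by $(1-p)$ to pass from considered neighbors to those in $M_2$.

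The main obstacle is step (4): making rigorous the claim that conditioning on $v \in M_2$ (a downstream event) does not spoil the independence of the earlier coins, and bounding the expected number of considered neighbors by $\alpha_1/p$. The clean way is to note that whether a neighbor $u$ (earlier in the order than $v$) is considered, and which of $M_1,M_2$ it enters, is determined by coins and adjacencies among vertices processed no later than $u$, all of which are independent of $v$'s own coin; and conditioning on $v \in M_2$ only conditions on $v$'s coin being "tails" and on $v$ being considered, the latter being a statement about $v$'s neighbors in $M_1$. One should check that this conditioning can only \emph{decrease} the expected number of considered neighbors of $v$ (intuitively, $v$ being considered means fewer of its neighbors made it into $M_1$), so the unconditional bound $\Ex[\#\text{considered neighbors in }M_1] \le \alpha_1$ combined with each considered neighbor independently being in $M_1$ w.p.\ $p$ gives $\Ex[\#\text{considered neighbors}] \le \alpha_1/p$ even after conditioning; I would phrase this via a coupling or a direct computation over the sorted order rather than invoking heavy machinery.
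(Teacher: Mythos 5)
Your approach is essentially the one in the paper: write the conditional expectation as a sum over neighbors $u$ of $v$, trade $M_2$-membership for $M_1$-membership via the coin ratio $(1-p)/p$, and then bound $\mathbb{E}\big[\,|N(v)\cap M_1|\mid v\in M_2\,\big]\le\alpha_1$ using that $N(v)\cap M_1$ is (deterministically) an independent subset of $N(v)$. The paper compresses this into the chain $\sum_{u\in N(v)}\mathbb{E}[X_u^2\mid v\in M_2]=\frac{1-p}{p}\sum_{u\in N(v)}\mathbb{E}[X_u^1\mid v\in M_2]\le\frac{1-p}{p}\alpha_1$, with $X_u^i$ indicating $u\in M_i$.

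The concern you raise in step~(4) --- whether conditioning on the downstream event $v\in M_2$ spoils the independence of the earlier coins --- is the right thing to worry about, and the paper's one-line chain does not address it. However, your proposed resolution points the wrong way. Consider a neighbor $u$ of $v$ that precedes $v$ in the sorted order. The event $v\in M_2$ forces $v$ to have been considered, which in turn forces $u\notin M_1$; hence $\Pr[u\in M_1\mid v\in M_2]=0$, while $\Pr[u\in M_2\mid v\in M_2]$ equals the (typically positive) probability that $u$ was considered, because conditioned on $v\in M_2$ the coin of any considered earlier neighbor is \emph{forced} to tails rather than remaining a free $\mathrm{Bernoulli}(p)$. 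On a path on three vertices $u$, $v$, $w$ with $u$ ranked highest, for example, $\Pr[u\in M_2\mid v\in M_2]=1$ but $\Pr[u\in M_1\mid v\in M_2]=0$. So the per-neighbor identity $\Pr[u\in M_2\mid v\in M_2]=\frac{1-p}{p}\Pr[u\in M_1\mid v\in M_2]$ that both you and the paper rely on holds only for neighbors processed after $v$ (whose coins are genuinely independent of the conditioning event) and fails for those processed before. Your heuristic that conditioning on $v\in M_2$ ``can only decrease'' the expected number of considered neighbors of $v$ is also backwards for the early ones: conditioning removes precisely the realizations in which a higher-ranked neighbor of $v$ entered $M_1$, so it tends to increase, not decrease, how many early neighbors were considered. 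A complete proof of the lemma therefore needs a separate argument controlling the $M_2$-neighbors of $v$ that precede $v$ in the order; neither your sketch nor the paper's proof supplies one.
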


\begin{proof}
Let us denote by $X_u^1$ and $X_u^2$ the indicator variables for the events $u\in M_1$ and $u\in M_2$ respectively. Then,
\begin{align*}
&\Ex\big[\lvert\mathcal{N}(v)\cap M_2\rvert \bigm| v\in M_2\big] = \Ex\Bigg[\sum_{u\in \mathcal{N}(v)} X_u^2 \bigm| v\in M_2\Bigg]= \sum_{u\in \mathcal{N}(v)} \Ex\big[X_u^2 \bigm| v\in M_2\big]\\ 
&= \frac{1-p}{p} \sum_{u\in \mathcal{N}(v)} \Ex\big[X_u^1 \bigm| v\in M_2\big] \le \frac{1-p}{p} \cdot \alpha_1
\end{align*}
\end{proof}

\begin{theorem}
Algorithm~\ref{alg:simulate} for weighted independent set is $O(\alpha_1^2)$-competitive, where $\alpha_1$ is the local independence number of the graph.
\end{theorem}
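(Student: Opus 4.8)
The plan is to relate the weight of the output $S_{Sim}=M_2\setminus(\text{deleted vertices})$ to $w(M_1)$, and then invoke the first lemma to compare against $w(S^*)$. First I would lower-bound $\Ex[w(M_2)]$ in terms of $\Ex[w(M_1)]$. Since each vertex that survives the independence test against $M_1$ goes to $M_1$ with probability $p$ and to $M_2$ with probability $1-p$ \emph{independently}, the construction of $M_1$ and $M_2$ is symmetric up to the coin bias: conditioning on a vertex $u$ passing the test, it is equally likely (up to the factor $(1-p)/p$) to be the one in $M_2$. This gives $\Ex[w(M_2)] = \frac{1-p}{p}\,\Ex[w(M_1)]$, which combined with the first lemma yields $\Ex[w(M_2)] \ge \frac{1-p}{p}\cdot p \cdot \frac{w(S^*)}{\alpha_1} = (1-p)\cdot\frac{w(S^*)}{\alpha_1}$.

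Next I would bound the loss incurred by the deletion step. Let $Y = w(M_2) - w(S_{Sim})$ be the total weight removed. A vertex $w\in M_2$ contributes its weight to $Y$ only if $w$ itself or one of its neighbors in $M_2$ is deleted; in either case there is an edge inside $M_2$ incident to $w$ or to a neighbor of $w$. A clean way to charge this is: every deleted vertex $w$ is deleted because $w$ or some neighbor of $w$ lies in $M_2$ and has a neighbor in $M_2$. I would bound $\Ex[Y]$ by summing, over $v\in M_2$, the weight deleted "because of conflicts at $v$", and use Lemma~\ref{lemma:conflicts}: conditioned on $v\in M_2$, the expected number of neighbors of $v$ also in $M_2$ is at most $\alpha_1(1-p)/p$. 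A slight subtlety is that weights are not uniform, so I would either assume (by the standard reduction, analogous to Lemma~\ref{lem:01}) that it suffices to handle $0/1$ weights, or argue directly that the worst case for the charging bound is when all relevant weights are comparable. Either way, the deletion removes at most a constant times $\frac{\alpha_1(1-p)}{p}$ fraction (in expectation) of $w(M_2)$, so $\Ex[w(S_{Sim})] \ge \bigl(1 - c\cdot\tfrac{\alpha_1(1-p)}{p}\bigr)\Ex[w(M_2)]$ for an absolute constant $c$, provided this factor is positive.

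Then I would plug in $p = \sqrt{\alpha_1/(\alpha_1+1)}$. With this choice, $1-p = 1 - \sqrt{1-1/(\alpha_1+1)} = \Theta(1/\alpha_1)$, so $\frac{\alpha_1(1-p)}{p} = \Theta(1)$ and can be made a small enough constant (if necessary, by rescaling the constant in front of $p$, or by noting $\frac{\alpha_1(1-p)}{p}\le \frac{\alpha_1+1}{2\alpha_1}\cdot\frac{1}{\sqrt{\alpha_1/(\alpha_1+1)}}$, which is bounded), keeping $1 - c\cdot\frac{\alpha_1(1-p)}{p}$ bounded away from $0$. At the same time $1-p = \Theta(1/\alpha_1)$, so chaining the three inequalities gives
\[
\Ex[w(S_{Sim})] \;\ge\; \Omega(1)\cdot (1-p)\cdot \frac{w(S^*)}{\alpha_1} \;=\; \Omega\!\left(\frac{w(S^*)}{\alpha_1^2}\right),
\]
which is the claimed $O(\alpha_1^2)$-competitiveness, and by the earlier observation $\Ex[w(S_{Alg_{\ref{alg:sample-and-price}}})] \ge \Ex[w(S_{Sim})]$ the same bound holds for Algorithm~\ref{alg:sample-and-price}.

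The main obstacle I anticipate is the deletion-loss bound: making the charging argument rigorous with arbitrary vertex weights, so that $\Ex[Y]$ is genuinely controlled by $\frac{\alpha_1(1-p)}{p}\Ex[w(M_2)]$ rather than by something that could blow up when a single heavy neighbor causes a heavy vertex to be dropped. I would handle this by charging each unit of deleted weight to an edge of $M_2$ and using that, conditioned on an endpoint being in $M_2$, the expected $M_2$-degree is $O(\alpha_1(1-p)/p)$ by Lemma~\ref{lemma:conflicts}; the $0/1$-weight reduction (Lemma~\ref{lem:01}-style) is the cleanest route if the direct argument gets messy. Everything else is routine substitution of the chosen $p$.
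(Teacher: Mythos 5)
Your high-level structure matches the paper exactly: (i) use the symmetry $\Ex[w(M_2)] = \frac{1-p}{p}\Ex[w(M_1)]$ together with the first lemma to get $\Ex[w(M_2)] \geq (1-p)\,w(S^*)/\alpha_1$, (ii) bound the deletion loss via Lemma~\ref{lemma:conflicts}, and (iii) plug in $p=\sqrt{\alpha_1/(\alpha_1+1)}$. The conclusion is right. However, the "main obstacle" you anticipate in the deletion step is a non-issue, and the paper handles it much more simply than the charging-over-edges or $0/1$-reduction routes you sketch. The final $S$ is precisely $\{v\in M_2 : N(v)\cap M_2=\emptyset\}$, so by linearity of expectation
\[
\Ex[w(S)] \;=\; \sum_v w(v)\,\Pr[v\in M_2]\cdot\Pr\big[N(v)\cap M_2=\emptyset \mid v\in M_2\big],
\]
and Markov applied to Lemma~\ref{lemma:conflicts} gives $\Pr\big[|N(v)\cap M_2|\ge 1 \mid v\in M_2\big]\le \alpha_1(1-p)/p$, hence $\Ex[w(S)]\ge \big(1-\alpha_1(1-p)/p\big)\Ex[w(M_2)]$. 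Each vertex is charged only its own weight times its own survival probability; a heavy neighbor causing $v$ to be dropped is already folded into the conditional probability for $v$, so arbitrary weights cause no trouble, and the constant $c$ you introduce is exactly $1$ (no rescaling of $p$ is needed, and indeed rescaling would alter the algorithm, not just the analysis). Your $0/1$-reduction fall-back is valid since the algorithm is ordinal, but it is unnecessary here — that reduction is used in the paper only for the lower-bound construction, not for this theorem.
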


\begin{proof}
By using Markov's inequality and Lemma~\ref{lemma:conflicts}, 
\[
\begin{array}{llcl}
& \Pr[\lvert\mathcal{N}(v)\cap M_2\rvert \ge 1 \bigm| v\in M_2] &\le& \alpha_1\cdot (1-p)/p \\[0.3cm]
\text{and} & \Pr[\lvert\mathcal{N}(v)\cap M_2\rvert < 1 \bigm| v\in M_2] & > & 1- (\alpha_1(1-p)/p)\enspace.
\end{array}
\]
Thus, we can conclude that 
\begin{align*}
\Ex[w(S)] &\ge \left(1-\alpha_1\cdot\frac{1-p}{p}\right) \cdot \Ex[w(M_2)]\ge \bigg(1-\alpha_1\cdot\frac{1-p}{p}\bigg)\cdot\frac{1-p}{\alpha_1}\cdot w(S^*)\enspace.
\end{align*}
The ratio is optimized for $p=\sqrt{\frac{\alpha_1}{\alpha_1+1}}$, which proves the theorem.
\end{proof}

As a prominent example, $\alpha_1 = 5$ in the popular class of unit-disk graphs. In such graphs, our algorithm yields a constant competitive ratio for online independent set in the ordinal model. 

\bibliography{literature,martin}

\begin{thebibliography}{10}

\bibitem{AbrahamIKM07}
David Abraham, Robert Irving, Telikepalli Kavitha, and Kurt Mehlhorn.
\newblock Popular matchings.
\newblock {\em SIAM J. Comput.}, 37(4):1030--1045, 2007.

\bibitem{AnshelevichH12}
Elliot Anshelevich and Martin Hoefer.
\newblock Contribution games in networks.
\newblock {\em Algorithmica}, 63(1--2):51--90, 2012.

\bibitem{AnshelevichP16}
Elliot Anshelevich and John Postl.
\newblock Randomized social choice functions under metric preferences.
\newblock In {\em Proc.\ 25th Intl.\ Joint Conf.\ Artif.\ Intell.\ (IJCAI)},
  pages 46--59, 2016.

\bibitem{AnshelevichS16}
Elliot Anshelevich and Shreyas Sekar.
\newblock Blind, greedy, and random: Algorithms for matching and clustering
  using only ordinal information.
\newblock In {\em Proc.\ 13th Conf.\ Artificial Intelligence (AAAI)}, pages
  390--396, 2016.

\bibitem{AnshelevichS16WINE}
Elliot Anshelevich and Shreyas Sekar.
\newblock Truthful mechanisms for matching and clustering in an ordinal world.
\newblock In {\em Proc.\ 12th Conf.\ Web and Internet Economics (WINE)}, pages
  265--278, 2016.

\bibitem{BabaioffIKK07}
Moshe Babaioff, Nicole Immorlica, David Kempe, and Robert Kleinberg.
\newblock A knapsack secretary problem with applications.
\newblock In {\em Proc.\ 10th Workshop Approximation Algorithms for
  Combinatorial Optimization Problems (APPROX)}, pages 16--28, 2007.

\bibitem{BabaioffIKK08}
Moshe Babaioff, Nicole Immorlica, David Kempe, and Robert Kleinberg.
\newblock Online auctions and generalized secretary problems.
\newblock {\em SIGecom Exchanges}, 7(2), 2008.

\bibitem{BabaioffIK07}
Moshe Babaioff, Nicole Immorlica, and Robert Kleinberg.
\newblock Matroids, secretary problems, and online mechanisms.
\newblock In {\em Proc.\ 18th Symp.\ Discrete Algorithms (SODA)}, pages
  434--443, 2007.

\bibitem{BateniHZ13}
MohammadHossein Bateni, MohammadTaghi Hajiaghayi, and Morteza Zadimoghaddam.
\newblock Submodular secretary problem and extensions.
\newblock {\em ACM Trans.\ Algorithms}, 9(4):32, 2013.

\bibitem{ChakrabartyS14}
Deeparnab Chakrabarty and Chaitanya Swamy.
\newblock Welfare maximization and truthfulness in mechanism design with
  ordinal preferences.
\newblock In {\em Proc.\ 5th Symp.\ Innovations in Theoret.\ Computer Science
  (ITCS)}, pages 105--120, 2014.

\bibitem{ChakrabortyL12}
Sourav Chakraborty and Oded Lachish.
\newblock Improved competitive ratio for the matroid secretary problem.
\newblock In {\em Proc.\ 23rd Symp.\ Discrete Algorithms (SODA)}, pages
  1702--1712, 2012.

\bibitem{ChenHKLM15}
Ning Chen, Martin Hoefer, Marvin K\"unnemann, Chengyu Lin, and Peihan Miao.
\newblock Secretary markets with local information.
\newblock In {\em Proc.\ 42nd Intl.\ Coll.\ Automata, Languages and Programming
  (ICALP)}, volume~2, pages 552--563, 2015.

\bibitem{DimitrovP12}
Nedialko Dimitrov and Greg Plaxton.
\newblock Competitive weighted matching in transversal matroids.
\newblock {\em Algorithmica}, 62(1-2):333--348, 2012.

\bibitem{DinitzK14}
Michael Dinitz and Guy Kortsarz.
\newblock Matroid secretary for regular and decomposable matroids.
\newblock {\em SIAM J. Comput.}, 43(5):1807--1830, 2014.

\bibitem{Dynkin63}
Eugene Dynkin.
\newblock The optimum choice of the instant for stopping a {M}arkov process.
\newblock In {\em Sov. Math. Dokl}, volume~4, pages 627--629, 1963.

\bibitem{FeldmanSZ15}
Moran Feldman, Ola Svensson, and Rico Zenklusen.
\newblock A simple \emph{O}(log log(rank))-competitive algorithm for the
  matroid secretary problem.
\newblock In {\em Proc.\ 26th Symp.\ Discrete Algorithms (SODA)}, pages
  1189--1201, 2015.

\bibitem{FeldmanT12}
Moran Feldman and Moshe Tennenholtz.
\newblock Interviewing secretaries in parallel.
\newblock In {\em Proc.\ 13th Conf.\ Electronic Commerce (EC)}, pages 550--567,
  2012.

\bibitem{FeldmanZ15}
Moran Feldman and Rico Zenklusen.
\newblock The submodular secretary problem goes linear.
\newblock In {\em Proc.\ 56th Symp.\ Foundations of Computer Science (FOCS)},
  pages 486--505, 2015.

\bibitem{FiatGKN15}
Amos Fiat, Ilia Gorelik, Haim Kaplan, and Slava Novgorodov.
\newblock The temp secretary problem.
\newblock In {\em Proc.\ 23rd European Symp.\ Algorithms (ESA)}, pages
  631--642, 2015.

\bibitem{FisherNW78}
Marshall Fisher, George Nemhauser, and Laurence Wolsey.
\newblock An analysis of approximations for maximizing submodular set
  functions-{II}.
\newblock In {\em Polyhedral combinatorics}, pages 73--87. Springer, 1978.

\bibitem{GoebelHKSV14}
Oliver G\"obel, Martin Hoefer, Thomas Kesselheim, Thomas Schleiden, and
  Berthold V\"ocking.
\newblock Online independent set beyond the worst-case: Secretaries, prophets
  and periods.
\newblock In {\em Proc.\ 41st Intl.\ Coll.\ Automata, Languages and Programming
  (ICALP)}, volume~2, pages 508--519, 2014.

\bibitem{HajiaghayiKP04}
MohammadTaghi Hajiaghayi, Robert Kleinberg, and David Parkes.
\newblock Adaptive limited-supply online auctions.
\newblock In {\em Proc.\ 5th Conf.\ Electronic Commerce (EC)}, pages 71--80,
  2004.

\bibitem{JailletSZ13}
Patrick Jaillet, Jos{\'e} Soto, and Rico Zenklusen.
\newblock Advances on matroid secretary problems: {F}ree order model and
  laminar case.
\newblock In {\em Proc.\ 16th Intl.\ Conf.\ Integer Programming and
  Combinatorial Optimization (IPCO)}, pages 254--265, 2013.

\bibitem{KesselheimRTV13}
Thomas Kesselheim, Klaus Radke, Andreas T\"onnis, and Berthold V\"ocking.
\newblock An optimal online algorithm for weighted bipartite matching and
  extensions to combinatorial auctions.
\newblock In {\em Proc.\ 21st European Symp.\ Algorithms (ESA)}, pages
  589--600, 2013.

\bibitem{KesselheimRTV14}
Thomas Kesselheim, Klaus Radke, Andreas T\"onnis, and Berthold V\"ocking.
\newblock Primal beats dual on online packing {LP}s in the random-order model.
\newblock In {\em Proc.\ 46th Symp.\ Theory of Computing (STOC)}, pages
  303--312, 2014.

\bibitem{KesselheimT17}
Thomas Kesselheim and Andreas T{\"{o}}nnis.
\newblock Submodular secretary problems: Cardinality, matching, and linear
  constraints.
\newblock In {\em Proc.\ 20th Workshop Approximation Algorithms for
  Combinatorial Optimization Problems (APPROX)}, pages 16:1--16:22, 2017.

\bibitem{Kleinberg05}
Robert Kleinberg.
\newblock A multiple-choice secretary algorithm with applications to online
  auctions.
\newblock In {\em Proc.\ 16th Symp.\ Discrete Algorithms (SODA)}, pages
  630--631, 2005.

\bibitem{KorulaP09}
Nitish Korula and Martin P{\'a}l.
\newblock Algorithms for secretary problems on graphs and hypergraphs.
\newblock In {\em Proc.\ 36th Intl.\ Coll.\ Automata, Languages and Programming
  (ICALP)}, pages 508--520, 2009.

\bibitem{Lachish14}
Oded Lachish.
\newblock O(log log rank) competitive ratio for the matroid secretary problem.
\newblock In {\em Proc.\ 55th Symp.\ Foundations of Computer Science (FOCS)},
  pages 326--335, 2014.

\bibitem{Soto13}
Jos{\'e} Soto.
\newblock Matroid secretary problem in the random-assignment model.
\newblock {\em SIAM J. Comput.}, 42(1):178--211, 2013.

\end{thebibliography}

\newpage
\appendix
\section{Appendix}\label{sec:appendix}

\subsection{Proof of Theorem~\ref{thm:packing}}

The proof is based on the following lemma.

\begin{lemma}\label{hypermatching-lemma}
Let the random variable $A_\ell$ denote the contribution of request $\bfx_\ell$ to the output and $\bfc^{\tau} \bfx^*$ the value of the optimal fractional solution. For requests $\bfx_j\in\{pn+1,\dots, n\}$ it holds that
\[
\mathbb{E}[A_j]\ge \Bigg(1-d\cdot\bigg(\frac{e(1-p)}{p}\bigg)^B\Bigg)\frac{\bfc^{\tau}\bfx^*}{(d+1)n}\enspace.
\]
\end{lemma}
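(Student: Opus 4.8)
The plan is to mirror the analysis of the bipartite matching algorithm (Lemma~\ref{bipartite-matching-lemma}), splitting $\mathbb{E}[A_j]$ into (a)~the expected profit that the greedy assignment computed on $L'$ hands to request $\bfx_j$ and (b)~the probability that this tentative assignment survives the capacity check. The two differences are that the ``greedy $2$-approximates a matching'' ingredient is replaced by the fact that the greedy assignment rule recovers a $1/(d+1)$ fraction of the optimal fractional value of the (sub-)LP it is run on, and that bounding the survival probability requires a concentration argument on the load of each resource rather than the simple per-vertex union bound used in Lemma~\ref{bipartite-matching-lemma}.

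For part (a) I would condition on $\bfx_j$ arriving at position $\ell\in\{pn+1,\dots,n\}$, so that $|L'|=\ell$ when $\bfx_j$ is processed. As in Lemma~\ref{bipartite-matching-lemma}, $L'$ is then a uniformly random $\ell$-subset of the $n$ requests and $\bfx_j$ is uniformly random in $L'$. The restriction of the optimal fractional solution $\bfx^*$ to the requests of $L'$ is feasible for the LP on $L'$, so the optimal fractional value of that LP has expectation at least $\frac{\ell}{n}\bfc^{\tau}\bfx^*$; using that the greedy assignment recovers at least a $1/(d+1)$ fraction of it, $\mathbb{E}[\bfc^{\tau}\bfx^{(L')}]\ge\frac{\ell}{(d+1)n}\bfc^{\tau}\bfx^*$, and averaging over which request of $L'$ is $\bfx_j$ gives $\mathbb{E}[\bfc^{\tau}\bfx_j^{(L')}]\ge\frac{\bfc^{\tau}\bfx^*}{(d+1)n}$.

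For part (b), note that once an assignment enters $\bfy$ it is never removed, so $A_j$ equals the profit of $\bfx_j^{(L')}$ as long as the test ``$\bfA\bfy\le\bfb$'' passes in step $\ell$. Let $k$ be the option greedy picks for request $j$ on $L'$ (nothing to prove otherwise). The test can fail only at a resource $i$ with $a_{i,j,k}>0$ --- at most $d$ such --- and, since $a_{i,j,k}\le\max_{j',k'}a_{i,j',k'}\le b_i/B$, only if the load $(\bfA\bfy)_i$ accumulated in steps $pn+1,\dots,\ell-1$ already exceeds $\frac{B-1}{B}b_i$. I would bound this load in expectation: the request handled in step $k'$ is uniform in its $k'$-element set $L'$ and the greedy solution on that set respects the capacities, so its expected contribution to resource $i$ is at most $b_i/k'$, whence $\mathbb{E}[(\bfA\bfy)_i]\le\sum_{k'=pn+1}^{\ell-1}b_i/k'\le b_i\ln(1/p)\le b_i\,\frac{1-p}{p}$. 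Since the load is a sum of contributions each bounded by $b_i/B$ with total mean at most $b_i(1-p)/p$, a Chernoff-type/moment tail bound yields $\Pr[(\bfA\bfy)_i>\frac{B-1}{B}b_i]\le(e(1-p)/p)^B$, and a union bound over the at most $d$ relevant resources gives a survival probability of at least $1-d(e(1-p)/p)^B$. Multiplying the estimates of (a) and (b) gives the lemma, and substituting $p=e(2d)^{1/B}/(1+e(2d)^{1/B})$, for which $e(1-p)/p=(2d)^{-1/B}$ and hence $1-d(e(1-p)/p)^B=1/2$, into $\sum_{j=pn+1}^n\mathbb{E}[A_j]$ then yields Theorem~\ref{thm:packing}.

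The main obstacle is the concentration step in (b). The per-step increments of the resource load are coupled through the random arrival order, and the estimate is made under conditioning both on ``$\bfx_j$ arrives at position $\ell$'' and on ``greedy picks option $k$ for $\bfx_j$''. The resolution, following Kesselheim et al.~\cite{KesselheimRTV13}, is that the load accumulated before step $\ell$ and the greedy choice made for $\bfx_j$ depend only on the \emph{unordered} set of the first $\ell-1$, resp. $\ell$, requests, so after conditioning the arrival order of the earlier requests is still uniform and one may still apply a higher-moment tail bound to the load; this is where the exponent $B$ (and hence the final dependence $d^{(B+1)/B}$) comes from. Verifying the greedy $1/(d+1)$ guarantee for the sub-LPs that arise in (a) and chasing the constants in the tail bound are the only remaining routine pieces.
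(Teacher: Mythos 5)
Your part (a) matches the paper's argument exactly: condition on the arrival position, use that $L'$ is a uniform subset and the current request is uniform in it, invoke the $1/(d+1)$ guarantee for greedy on the sub-LP, and conclude $\mathbb{E}[\bfc^{\tau}\bfx_j^{(L')}]\ge\frac{\bfc^{\tau}\bfx^*}{(d+1)n}$. This is precisely what the paper does.

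Part (b) takes a genuinely different route from the paper, and it is the one place where your plan remains incomplete. The paper does not bound the \emph{load} $(\bfA\bfy)_i$ and then apply a Chernoff/moment inequality. Instead it introduces a cruder (and simpler) surrogate: it says the algorithm succeeds at step $j$ if every affected resource $i$ was \emph{touched} (selected option has $a_{i,j',k'}\neq 0$) at most $B-1$ times in steps $pn+1,\dots,j-1$, uses the per-step marginal bound $\Pr[b_i\text{ touched in step }s]\le B/s$, and then bounds $\Pr[\text{touched}\ge B\text{ times}]$ by a direct first-moment (union) argument over $B$-subsets $C\subseteq\{pn+1,\dots,j-1\}$: $\sum_{|C|=B}\prod_{s\in C}B/s\le\binom{(1-p)n}{B}(B/(pn))^B\le(e(1-p)/p)^B$. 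No Chernoff bound, no concentration machinery. Note that ``touched $\ge B$ times'' implies ``load $> \frac{B-1}{B}b_i$'' is possible, so the paper's event contains yours; both give a valid sufficient condition for failure, but the count-of-touches surrogate decouples the argument from the actual (coupled, order-dependent) load values.

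The concern with your sketch as written is that you invoke ``a Chernoff-type/moment tail bound yields $\Pr[(\bfA\bfy)_i>\frac{B-1}{B}b_i]\le(e(1-p)/p)^B$'' without deriving it, and a naive multiplicative Chernoff on $[0,b_i/B]$-bounded increments with mean $\le b_i(1-p)/p$ actually produces a tail of the form $\bigl(\frac{eB(1-p)}{p(B-1)}\bigr)^{B-1}$, which is trivial for $B=1$ and does not match the claimed $(e(1-p)/p)^B$; on top of that, the increments are not independent. You correctly flag the dependency as the ``main obstacle,'' but the paper's resolution is not a Chernoff bound at all --- it is the elementary $B$-subset union bound above, which only needs the marginal bound $B/s$ per step. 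If you replace your concentration step with that counting argument (or, equivalently, Markov applied to $\binom{X}{B}$ where $X$ is the number of touches), your proof closes cleanly and reproduces the paper's constants.
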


\begin{proof}
If $x_{j,k}^{(L')}=1$, then as in proof of Lemma~\ref{bipartite-matching-lemma}, 
we get $\mathbb{E}[\bfc_j\bfx_j^{(L')}]=\mathbb{E}[c_{j,k}x_{j,k}^{(L')}]\ge \frac{\bfc^{\tau}\bfx^*}{(d+1)n}$, where the expectation is taken over the choice of the set $L'$ and the choice of the last vertex in the order of arrival.

The algorithm sets $\bfy_j$ to $\bfx_j^{(L')}$ only if the capacity constraints can be respected. For the sake of analysis, we assume that the algorithm only sets $\bfy_j$ to $\bfx_j^{(L')}$ if every capacity constraint $b_i$ that $\bfx_j^{(L')}$ affects ($x_{j,k}^{(L')}=1$ and $a_{i,j,k}\neq0$) is affected by at most $B-1$ previous requests. We bound the probability of a capacity constraint $b_i$ being affected in any preceding step $s\in\{pn+1,\dots, j-1\}$, for a fixed $i$: 
\begin{align*}
\Pr[b_i\text{ affected by } x_{s,k'}^{(L')}=1]&\le \sum_{\bfx_{j'}\in\{1,\dots,s\}}\Pr[(\bfx_{j'}\text{ is last in the order}) \land (a_{i,j',k'}\neq0)]\\
&\le \frac{1}{s} \sum_{\bfx_{j'}\in\{1,\dots,s\}}\Pr[a_{i,j',k'}\neq0]\le \frac{B}{s}\enspace, 
\end{align*}
where the last step follows from $\mathbf{y}$ being a feasible solution throughout the run of the algorithm.
Now, we bound the probability of not being able to set $\bfy_j$ to $\bfx_j^{(L')}$:
\begin{align*}
\Pr[b_i \text{ is affected at least } B \text{ times}] &\le \sum_{\substack{C\subseteq \{pn+1,\dots,j-1\},\\ \left\vert C \right\vert = B}} \bigg(\prod_{s\in C}\frac{B}{s}\bigg)\\ &\le {(1-p)n \choose B}\cdot \bigg( \frac{B}{pn} \bigg)^B\le \bigg(\frac{(1-p)e}{p} \bigg)^B\enspace,
\end{align*}
so the probability of succeeding in setting $\bfy_j$ to $\bfx_j^{(L')}$ is
\[
\Pr[\bfA\mathbf{y}\le \mathbf{b}] \ge 1 - d\cdot \bigg( \frac{(1-p)e}{p} \bigg)^B\enspace,
\]
because we can do a union bound over all $b_i$ that are affected by $\bfx_j^{(L')}$ and there are at most $d$ such, since that is the maximal number of non-zero entries in any column of the constraint matrix $\bfA$.

Combining this with the inequality regarding the expected contribution of $\bfx_j^{(L')}$, we get the claimed result.
\end{proof}

Using Lemma~\ref{hypermatching-lemma}, we get
\begin{align*}
\mathbb{E}[\bfc^{\tau} \mathbf{y}]&=\sum_{\ell=pn+1}^{n} \mathbb{E}[A_\ell]\ge \sum_{l=pn+1}^{n} \Bigg(1-d\cdot \bigg(\frac{e(1-p)}{p} \bigg)^B\Bigg)\frac{\bfc^{\tau}\mathbf{x}^*}{(d+1)n}\\
&= \frac{\bfc^{\tau}\mathbf{x}^*}{d+1}\cdot \frac{1}{1+e(2d)^{1/B}}\cdot \Bigg(1-d\cdot \bigg(\frac{1}{(2d)^{1/B}} \bigg)^B\Bigg)\ge \frac{\bfc^{\tau}\mathbf{x}^*}{2(d+1)(1+2ed^{1/B})}\enspace.
\end{align*} 
\qed

Note that Theorem~\ref{thm:packing} contains the one-sided b-hypermatching problem as a special case. For the even more special case of $b=1$ in the one-sided hypermatching, an algorithm was given in \cite{KorulaP09}, which also works in the ordinal model. Our ratio in this special case is similar, but our approach extends to arbitrary capacities $b \ge 1$.


\subsection{Proof of Theorem~\ref{thm:generalMatching}}

The proof is based on the following lemma.

\begin{lemma}
Let the random variable $A_\ell$ denote the contribution of vertex $\ell\in{\lfloor n/2 + n/(2e)\rfloor}$ to the output, i.e. the weight of the edge assigned to $\ell$ in $M$. Then,
\[
\mathbb{E}[A_\ell] \ge \frac{\ceil{\frac{n}{2} + \frac{n}{2e}}}{\ell-1} \cdot \frac{1}{2} \cdot \frac{w(M^*)}{n}
\] 
\end{lemma}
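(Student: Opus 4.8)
The statement is the analogue of Lemma~\ref{bipartite-matching-lemma} for Algorithm~\ref{alg:matching-ordinal}, so the plan is to mirror that argument, tracking the two sources of loss: (i) the weight captured by the greedy matching edge assigned to $\ell$, and (ii) the probability that this edge survives, i.e.\ that its $R$-endpoint is still free. The extra factor of $2$ compared to the bipartite case comes from the fact that here the ``static'' side $R$ is itself a random half of $V$, so the optimal matching $M^*$ of $G$ need not be contained in $G[L \cup R]$; only edges of $M^*$ with one endpoint in $R$ and one in $L$ are usable, and in expectation that is at least half of $w(M^*)$.

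First I would argue the weight bound. Condition on the set $R$ (the first $\lfloor n/2\rfloor$ arrivals) and on the set $L' \subseteq V\setminus R$ of size $\ell - \lfloor n/2\rfloor$ consisting of the first $\ell$ arrivals minus $R$; by random arrival order, $R$ is a uniformly random $\lfloor n/2\rfloor$-subset of $V$, and conditioned on $R$, the current vertex $\ell$ is uniform in $L'$ and $L'$ is a uniformly random subset of $V\setminus R$ of the appropriate size. The max-weight matching in the bipartite graph between $R$ and $V \setminus R$ has expected weight at least $w(M^*)/2$ over the choice of $R$ (each edge of $M^*$ crosses the cut with probability roughly $1/2$). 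A greedy matching on $G[L'\cup R]$ is a $2$-approximation to the max-weight matching of $G[L'\cup R]$ by~\cite{AnshelevichH12}, and restricting a max-weight bipartite matching on $(R, V\setminus R)$ to the random subset $L'$ keeps an $|L'|/|V\setminus R|$-fraction of its weight in expectation, which is $\ell/n$ up to the floor terms (here I would be slightly careful: $|L'|/|V\setminus R| = (\ell - \lfloor n/2\rfloor)/(n - \lfloor n/2\rfloor) \ge \ell/n$ for $\ell \ge \lceil n/2 + n/(2e)\rceil$, which is what makes the clean bound go through). Hence $\Ex[w(M^{(\ell)})] \ge \tfrac{\ell}{n}\cdot\tfrac{w(M^*)}{2}\cdot\tfrac12$, and since $\ell$ is the uniformly random freshly-arrived vertex among the $\ell$ candidates, $\Ex[w(e^{(\ell)})] \ge \tfrac{1}{\ell}\Ex[w(M^{(\ell)})] \ge \tfrac{1}{2}\cdot\tfrac{w(M^*)}{2n}$.

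Next I would bound $\Pr[e^{(\ell)}$ can be added to $M]$, i.e.\ that the endpoint $r$ of $e^{(\ell)}=(\ell,r)$ is unmatched at step $\ell$. This is identical in form to the bipartite proof: $r$ gets matched at step $k$ only if $r \in e^{(k)}$, and since $e^{(k)}$ is the greedy-matching partner of the uniformly random $k$-th arrival, $\Pr[r\in e^{(k)}] \le 1/k$, with these events over disjoint ``which vertex arrives $k$-th'' coordinates so that a product bound applies; this gives $\Pr[r \text{ unmatched at step }\ell]\ge \prod_{k=\lceil n/2+n/(2e)\rceil}^{\ell-1}\frac{k-1}{k} = \frac{\lceil n/2 + n/(2e)\rceil - 1}{\ell - 1}$. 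Here the sampling phase has size $\lfloor n/2\rfloor + \lfloor n/(2e)\rfloor$, so the acceptance phase starts at index $\lceil n/2 + n/(2e)\rceil$ (up to rounding), which is exactly the lower endpoint in the statement. Multiplying the survival probability by the expected weight $\Ex[w(e^{(\ell)})]$ and using independence of the weight of $e^{(\ell)}$ from the ``which $R$-vertices were hit earlier'' events (the order within steps $1,\dots,\ell-1$ is irrelevant for whether $r$ was chosen, exactly as noted in the bipartite proof) yields $\Ex[A_\ell] \ge \frac{\lceil n/2+n/(2e)\rceil}{\ell-1}\cdot\frac12\cdot\frac{w(M^*)}{n}$, as claimed. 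The main obstacle is the bookkeeping around the random split of $V$ into $R$ and $L$: one must be careful that conditioning on $R$ does not bias the arrival order of $V\setminus R$, and that the cut-crossing argument $\Ex_R[w(M^*\cap \delta(R,V\setminus R))] \ge w(M^*)/2$ is clean (it follows because the endpoints of each edge of $M^*$ are split across the partition with probability $\ge 1/2 \cdot \frac{\lfloor n/2\rfloor}{n-1}\cdot 2 \approx 1/2$; a factor slightly below $1/2$ is absorbed elsewhere, or one reindexes to get exactly $1/2$) — everything else is a transcription of Lemma~\ref{bipartite-matching-lemma}.
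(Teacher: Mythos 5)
Your overall approach is the same as the paper's, which here is deliberately terse (the proof is one sentence: ``similar to Lemma~\ref{bipartite-matching-lemma}, with the additional observation that each edge is available with probability $1/2$''), and your expansion of the weight bound is in the right spirit. One cosmetic slip: the inequality $(\ell-\lfloor n/2\rfloor)/(n-\lfloor n/2\rfloor)\ge\ell/n$ is actually reversed for $\ell<n$, but this is harmless since the $|L'|$-factor cancels against the $1/|L'|$ from choosing $\ell$ uniformly in $L'$, and $\mathbb{E}[w(e^{(\ell)})]$ ends up independent of $\ell$ anyway.

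There is, however, a genuine gap in the survival-probability step, which you describe as ``identical in form to the bipartite proof.'' It is not. In Lemma~\ref{bipartite-matching-lemma}, the bound $\Pr[r\in e^{(k)}]\le 1/k$ is valid because at step $k$ the set $L'$ has exactly $k$ elements and the $k$-th arrival is uniform over $L'$. Here $L'$ at step $k$ has only $k-\lfloor n/2\rfloor$ elements: the first $\lfloor n/2\rfloor$ arrivals are all absorbed into $R$ and cannot be the $k$-th arrival, so conditioned on $(R,L')$ the $k$-th arrival is uniform over $L'$, giving the weaker $\Pr[r\in e^{(k)}]\le 1/(k-\lfloor n/2\rfloor)$, which is strictly larger than $1/k$. (Sanity check: with $n=4$ the first two arrivals form $R$ and $|L'|=1$ at step $k=3$, so the matching probability can be $1$, not $\le 1/3$.) Asserting $\le 1/k$ therefore overestimates the survival probability. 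With the valid bound, the telescoping product gives $\Pr[r\text{ unmatched at step }\ell]\ge \lfloor n/(2e)\rfloor/(\ell-1-\lfloor n/2\rfloor)$, and summing gives about $w(M^*)/(4e)$, i.e.\ a competitive ratio around $4e$ rather than $12e/(e+1)$. To be fair, the paper's own one-line proof and the lemma as stated gloss over exactly this index shift, and your blind reconstruction faithfully reproduces the paper's argument --- but the specific step $\Pr[r\in e^{(k)}]\le 1/k$ is the place where the transcription of Lemma~\ref{bipartite-matching-lemma} fails and should be flagged rather than taken over verbatim.
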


\begin{proof}
The proof is similar to the one of Lemma 5, 
with the additional observation that each edge is available with probability $\frac{1}{2}$. It is available only if the incident vertices are assigned to different sides of the bipartition.
\end{proof}

We now use the lemma to bound as follows:
\begin{align*}
\mathbb{E}[w(M)] &= \mathbb{E} \Bigg[\sum_{\ell=1}^n A_\ell\Bigg]\ge \sum_{\ell=\ceil{n/2 + n/(2e)}}^n \frac{\lfloor n/2 + n/(2e)\rfloor}{\ell-1} \cdot \frac{1}{2} \cdot \frac{w(M^*)}{n} \\
&= \frac{\lfloor n/2 + n/(2e)\rfloor}{n} \cdot \frac{w(M^*)}{2} \cdot \sum_{\ell=\lfloor n/2 + n/(2e)\rfloor}^{n-1} \frac{1}{\ell}\\
&\ge \bigg( \frac{1}{2}\Big( 1+\frac{1}{e}\Big) - \frac{1}{n}\bigg) \cdot \frac{w(M^*)}{2} \cdot \frac{1}{3}
\end{align*}
\qed


\subsection{Competitive Ratio and 0--1 Weights}
\label{subsec:0-1-weights}

For worst-case bounds, we can restrict our attention to instances where all elements have cardinal weights in $\{0,1\}$. These instances always result in the worst competitive ratio, as shown in the following lemma.

\begin{lemma}\label{lem:01}
By converting an arbitrary weighted instance to an instance with weights in $\{0,1\}$, the competitive ratio between the optimum solution and the solution computed by an algorithm based on ordinal information can only deteriorate.
\end{lemma}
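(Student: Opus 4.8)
The plan is to argue that any ordinal algorithm's competitive ratio on an arbitrary weighted instance is matched (or worsened) by its behavior on a suitable $\{0,1\}$-weighted companion instance, so the supremum of the competitive ratio is attained on $\{0,1\}$ instances. The key point is that an ordinal algorithm sees only the total order of arrived elements, so its decisions depend on the instance \emph{only through this order}; thus if we replace the weights by a $\{0,1\}$ vector that is consistent with the same order (or a refinement of it), the algorithm behaves identically on every arrival sequence, and in particular selects the same (random) feasible set $S_{alg}$. What changes is the objective value of $S_{alg}$ and of the optimum $S^*$.

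\textbf{Key steps.} First I would fix an arbitrary instance with weights $w_1 \ge w_2 \ge \dots \ge w_n \ge 0$ (indexing elements in the fixed global weight order, breaking ties arbitrarily but consistently with whatever tie-breaking the ordinal oracle uses). Second, for a threshold index $t \in \{0,1,\dots,n\}$ define the $\{0,1\}$ instance $I_t$ in which elements $1,\dots,t$ get weight $1$ and the rest weight $0$; crucially $I_t$ induces a total order on every subset of elements that is consistent with the original order, so the ordinal algorithm produces exactly the same distribution over $S_{alg}$ on $I_t$ as on the original instance (this is where the ``ordinal'' hypothesis is essential — a cardinal algorithm could distinguish the instances). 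Third, I would express both $w(S^*)$ and $\Ex[w(S_{alg})]$ for the original instance as nonnegative combinations of the corresponding quantities on the $I_t$'s: writing $w_e = \sum_{t=1}^{n} (w_t - w_{t+1}) \cdot \mathbf{1}[e \le t]$ with $w_{n+1}:=0$, we get, for any fixed set $S$, $w(S) = \sum_{t=1}^n (w_t - w_{t+1}) \, w^{(t)}(S)$ where $w^{(t)}$ is the weight function of $I_t$ and $\lambda_t := w_t - w_{t+1} \ge 0$. Hence $w(S^*) = \sum_t \lambda_t\, w^{(t)}(S^*) \le \sum_t \lambda_t \, w^{(t)}(S^*_t)$, where $S^*_t$ is an optimum for $I_t$, and $\Ex[w(S_{alg})] = \sum_t \lambda_t\, \Ex[w^{(t)}(S_{alg})]$ since $S_{alg}$ has the same law on every $I_t$. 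Fourth, if every $I_t$ had competitive ratio at most $c$, i.e. $w^{(t)}(S^*_t) \le c \cdot \Ex[w^{(t)}(S_{alg})]$ for each $t$, then summing with weights $\lambda_t$ gives $w(S^*) \le c \cdot \Ex[w(S_{alg})]$, i.e. the original ratio is also at most $c$. Taking the contrapositive: if some weighted instance has ratio worse than $c$, some $I_t$ does too, so the worst ratio is realized on $\{0,1\}$ instances. Finally, I should note that one must also handle the possibility $w(S^*)=0$ (ratio trivially fine) and the fact that the $I_t$ remain instances of the same combinatorial problem (same ground set, same $\calS$), which is immediate since only the weights change.

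\textbf{Main obstacle.} The one genuinely delicate point is the claim that $S_{alg}$ has the \emph{same distribution} on the original instance and on each $I_t$. This needs the ordinal oracle on $I_t$ to return, for every arrived subset, an order that the algorithm cannot distinguish from the one it would see on the original instance — which holds because the $\{0,1\}$ weights of $I_t$ are consistent with the original strict order refinement, so we can have the oracle break the new ties exactly as they were broken originally. One must be slightly careful that the algorithm's internal randomness is coupled across the two instances (use the same random seed), and that the random arrival order is also coupled; then a round-by-round induction shows the algorithm's state, hence its accept/reject decisions, coincide on the two instances for every realization. Once this coupling is set up cleanly, the rest is the elementary linear-combination argument above.
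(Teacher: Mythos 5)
Your proof is correct but takes a genuinely different route from the paper's. The paper argues by an explicit sequence of weight transformations on a fixed instance: it zeroes out every weight below the smallest weight appearing in the optimum $S^*$, rounds every intermediate weight down to the next $S^*$-weight below it (both steps fix $w(S^*)$ while only decreasing the algorithm's value, so the ratio can only increase), and then treats the ratio as a fractional-linear function of $w(a_k^*)$ alone, pushing $w(a_k^*)$ to $0$ or up to $w(a_{k-1}^*)$ depending on the sign of the derivative; iterating collapses all weights to $\{0,1\}$. You instead use the classical level-set decomposition $w(\cdot) = \sum_t \lambda_t\, w^{(t)}(\cdot)$ with $\lambda_t = w_t - w_{t+1} \ge 0$, observe that an ordinal algorithm with coupled tie-breaking and coupled randomness produces the \emph{same} distribution over $S_{alg}$ on the original instance and on each threshold instance $I_t$, and then conclude by linearity of expectation that the ratio of the original instance is a $\lambda$-weighted average dominated by the worst $I_t$. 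Both arguments rest on the same central fact — an ordinal algorithm's decisions are invariant under order-preserving weight changes — but your version handles the expectation over arrival orders cleanly via linearity (the paper's manipulation of $b_1,\dots,b_m$ and its integer multiplicity $r$ implicitly treats a fixed realization of $S_{alg}$), and it explicitly addresses the tie-breaking that arises once distinct weights are merged, a point the paper leaves implicit after assuming distinct weights up front. Your argument also pinpoints the hard $\{0,1\}$ instance as one of the $n$ threshold instances $I_t$, which is slightly more informative; the paper's iterative rounding is more hands-on but less modular.
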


\begin{proof}
Without loss of generality, we assume that all elements of the original instance have distinct weights. We denote the elements chosen in the optimal solution by $a_1^*,\dots, a_k^*$ and the elements chosen by the algorithm by $b_1, \dots, b_m$. The numbering respects the ordinal ordering of weights, i.e., $a_1^* \succ a_2^*\succ \dots a_k^*$ and $b_1 \succ b_2 \succ \dots \succ b_m$. The competitive ratio is
\[
\frac{\OPT}{\text{ALG}} = \frac{w(a_1^*) + \dots + w(a_k^*)}{w(b_1) + \dots + w(b_m)}\enspace.
\]
This ratio can only increase if we change the weight of all elements that appear after $a_k^*$ in the global ordering to $0$. This effectively shortens the set of elements with a contribution chosen by the algorithm to $b_1,\dots, b_\ell$, for some suitable $\ell \le m$. Furthermore, we change the weights of all elements between $a_i^*$ and $a_{i+1}^*$ by decreasing them to $a_{i+1}^*$. We now denote the elements that the algorithm chose by $c_1,\dots,c_l$, since their weights might have changed. Both of these changes do not influence $\OPT$, but they reduce the weight of the solution returned by the algorithm. We continue converting the instance, by focusing on $w(a_k^*)$. Then,
\begin{align*}
\frac{\OPT}{\text{ALG}}&= \frac{w(a_1^*) + \dots + w(a_k^*)}{w(b_1) + \dots + w(b_m)} \le \frac{w(a_1^*) + \dots + w(a_k^*)}{w(b_1) + \dots + w(b_\ell)}\\&\le \frac{w(a_1^*) + \dots + w(a_k^*)}{w(c_1) + \dots + w(c_\ell)}= \frac{A+w(a_k^*)}{B+r\cdot w(a_k^*)}\enspace, 
\end{align*}
where $A=w(a_1^*)+\dots+w(a_{k-1}^*)$, $B$ is the sum of the weights of all elements that the algorithm chose which are not equal to $w(a_k^*)$ in the altered instance and $r\in \mathbb{N}_0$.

Taking the derivative for $w(a_k^*)$,
\[
\bigg( \frac{A+w(a_k^*)}{B+r\cdot w(a_k^*)} \bigg)' = \frac{B-r\cdot A}{(B+r\cdot w(a_k^*))^2}\enspace,
\]
we either decrease $w(a_k^*)$ to $0$ or raise it to $w(a_{k-1}^*)$ (depending what makes the ratio increase, i.e., the sign of the derivative). We continue this procedure until all weights of the instance are equal to either to $a_1^*$ or $0$. Note that these changes preserve the global ordering. W.l.o.g., we can finally set $w(a_1^*) = 1$. 

Note that we increased the ratio between the solution of the algorithm and the optimal solution for the original weights, when applying the transformed weights. Note that none of these transformations change the decisions of the algorithm. In contrast, the optimum solution for the transformed weights can only become better, which even further deteriorates the competitive ratio.
\end{proof}

\end{document}